\documentclass[a4paper,UKenglish,cleveref, autoref, thm-restate,]{lipics-v2021}

\usepackage{comment}
\usepackage{scalerel}
\usepackage[all]{nowidow}

\nolinenumbers
\hideLIPIcs

\title{The Power of Graph Doubling: Computing Ultrabubbles in a Bidirected Graph by Reducing to Weak Superbubbles} %

\titlerunning{The Power of Graph Doubling: Reducing Ultrabubbles to Weak Superbubbles} %

\crefformat{footnote}{#2\footnotemark[#1]#3}

\author{Sebastian Schmidt}{Department of Computer Science, University of Helsinki, Finland}{sebastian.schmidt@helsinki.fi}{https://orcid.org/0000-0003-4878-2809}{}

\author{Juha Harviainen\footnote{\label{foot:contribution}Equal contribution
}}{Department of Computer Science, University of Helsinki, Finland}{juha.harviainen@helsinki.fi}{https://orcid.org/0000-0002-4581-840X}{}

\author{Corentin Moumard\cref{foot:contribution}}{Department of Computer Science, University of Helsinki, Finland}{corentin.moumard@helsinki.fi}{}{}

\author{Aleksandr Politov\cref{foot:contribution}}{Department of Computer Science, University of Helsinki, Finland}{aleksandr.politov@helsinki.fi}{}{}

\author{Francisco Sena\cref{foot:contribution}}{Department of Computer Science, University of Helsinki, Finland}{francisco.sena@helsinki.fi}{https://orcid.org/0000-0002-3508-4473}{}

\author{Alexandru I. Tomescu\footnote{Corresponding author}}{Department of Computer Science, University of Helsinki, Finland}{alexandru.tomescu@helsinki.fi}{https://orcid.org/0000-0002-5747-8350}{}

\authorrunning{Schmidt~et~al.} %

\Copyright{Sebastian Schmidt, Juha Harviainen, Corentin Moumard, Aleksandr Politov, Francisco Sena and Alexandru Ioan Tomescu} %

\ccsdesc[500]{Applied computing~Bioinformatics}
\ccsdesc[500]{Theory of computation~Graph algorithms analysis}%

\keywords{Doubled graph, Bidirected graphs, Graph algorithms, Ultrabubbles, Superbubbles, Pangenomics} %

\category{} %

\relatedversion{} %

\funding{Co-funded by the European Union (ERC, SCALEBIO, 101169716). Views and opinions expressed are however those of the author(s) only and do not necessarily reflect those of the European Union or the European Research Council. Neither the European Union nor the granting authority can be held responsible for them. Co-funded also by the Research Council of Finland, Grants 346968 and 358744.
\scalerel*{\includegraphics{LOGO_ERC-FLAG_FP.png}}{Äg}
}%

\acknowledgements{We want to thank an anonymous reviewer of \cite{harviainen2026orientation} for posing the question if graph doubling can be used to compute ultrabubbles via computing superbubbles.}%

\newcommand{\signs}{\ensuremath{\{+,-\}}}

\EventEditors{John Q. Open and Joan R. Access}
\EventNoEds{2}
\EventLongTitle{42nd Conference on Very Important Topics (CVIT 2016)}
\EventShortTitle{CVIT 2016}
\EventAcronym{CVIT}
\EventYear{2016}
\EventDate{December 24--27, 2016}
\EventLocation{Little Whinging, United Kingdom}
\EventLogo{}
\SeriesVolume{42}
\ArticleNo{23}

\begin{document}

\maketitle

\begin{abstract}
Bidirected graphs are a common generalisation of directed graphs where arcs can also be incoming
to both their incident nodes, or outgoing from both their incident nodes.
Such arcs allow a walk to change direction.
Some algorithms can easily be adapted from directed graphs to bidirected graphs, such as shortest path algorithms.
These adaptions are already used in practice, and implicitly use \emph{graph doubling} as a technique to apply an algorithm for directed graphs to bidirected graphs.

In other cases, the applicability of graph doubling is not that obvious.
For example, \emph{superbubbles} and their generalisation to bidirected graphs \emph{ultrabubbles} appear similar, but the similarity is hidden enough such that it was not yet discovered by the community.
Ultrabubbles are a common structure in bidirected biological graphs which carries biological meaning, but also functions as a nested clustering method, since an ultrabubble is separated by only two nodes from the rest of the graph.

There is an existing method that enumerates a structure similar to ultrabubbles by enumerating (weak) superbubbles in the doubled graph.
However, the literature currently lacks a theoretical investigation if that structure is actually ultrabubbles, or if there is any connection between superbubbles and ultrabubbles except that a superbubble is an ultrabubble in a directed graph.
A partial result connecting superbubbles and ultrabubbles exists by Harviainen~et~al. (2026).
They orient a bidirected graph with a special method to fix conflicts and then enumerate weak superbubbles, resulting in an enumeration of ultrabubbles in the original graph.
This technique works only on bidirected graphs that contain a source, sink, or cut node, since the conflicts in orientation are fixed in a way that does not maintain connectivity.

Graph doubling on the other hand maintains connectivity, and allows to draw a direct connection between ultrabubbles and weak superbubbles.
This results in the first linear-time reduction-based algorithm for computing ultrabubbles on any bidirected graph.
Together with the fact that graph doubling is already used implicitly in simple cases, our result motivates that graph doubling is a powerful yet simple technique to apply algorithms for directed graphs to bidirected graphs.
\end{abstract}

\section{Introduction}

\subsection{Bidirected graphs}

\begin{figure}
    \centering
    \includegraphics[scale=1]{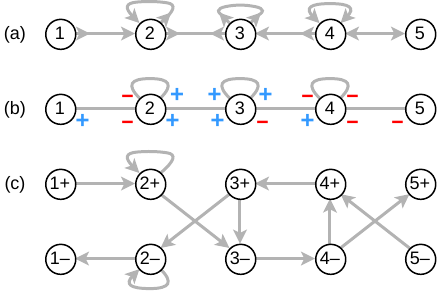}
    \caption{Three representations of the same bidirected graph.
    \textbf{(a)} Arrow representation.
    An arc can be incoming or outgoing at both ends.
    If a walk enters a node incoming, it must leave outgoing, and if a walk enters outgoing, it must leave incoming.
    Arcs with two outgoing ends allow a walk to switch from forward to reverse, and two incoming ends allow a walk to switch from reverse to forward.
    The sequence $1, 2, 2, 3, 4, 4, 3, 3, 4, 5$ is a walk $W$.
    \textbf{(b)} Signed representation.
    A walk alternates signs at each node.
    $W$ is represented by the sequence $1+, 2+, 2+, 3-, 4-, 4+, 3+, 3-, 4-, 5+$.
    \textbf{(c)} Doubled graph.
    Multiarcs caused by the self loops have been collapsed into a single arc.
    This is a directed graph produced by the doubling operation applied to (b).
    Directed self loops become directed self loops in the doubled graph, while $++$ and $--$ self loops become arcs between $-$ and $+$ of the same node.
    $W$ is represented by the sequence $1+, 2+, 2+, 3-, 4-, 4+, 3+, 3-, 4-, 5+$.}
    \label{fig:doubling}
\end{figure}

Bidirected graphs are a generalisation of directed graphs where arcs can also be incoming to both their incident nodes, or outgoing from both their incident nodes.
In mathematical contexts, they are usually depicted as undirected graphs with plus or minus \emph{signs} at the incidences.
A walk must obey the \emph{sign-alternation} rule, meaning that when it enters a node with a $-$ ($+$), then it must leave it through a $+$ ($-$).
The edges in a bidirected graph are represented by unordered pairs of \emph{signed nodes} $v\alpha$ which are pairs of a node $v$ and a sign $\alpha \in \{+, -\}$.
See \Cref{fig:doubling}~(b) for an example of this definition and see \Cref{fig:doubling}~(a) for a visualisation that draws the connection to directed graphs.
Bidirected graphs are heavily used in practice in bioinformatics~\cite{bessouf2019transitivebidirected,medvedev2007computability,kita2017bidirectedgraphsisigned}, besides being also interesting generalisation of directed graphs from a theoretical perspective~\cite{schrijver2003combinatorial}.

While bidirected graphs are a simple generalisation of directed graphs, graph algorithms for directed graphs require at least extra consideration when applied to bidirected graphs, and at worst completely different approaches.
Typically, graph algorithms have to be modified to track the signs of nodes.
For example, Dijkstra's classic shortest path algorithm~\cite{dijkstra1959note} can be applied to bidirected graphs by tracking the two signs of as two separate nodes, and specifying the start and target nodes with signs.
In the same way, reachability-based algorithms can be modified.
As these modifications are rather straightforward, they are quietly being used in practice when applicable~\cite{chang2025giraffe,dabbaghie2022bubblegun,garg2018graph,iqbal2012novo,shafin2020nanopore,matchtigs}.
However, not all cases are that simple, and in this work we consider one such case.

\subsection{Bubbles}

\begin{figure}
    \centering
    \includegraphics[scale=1]{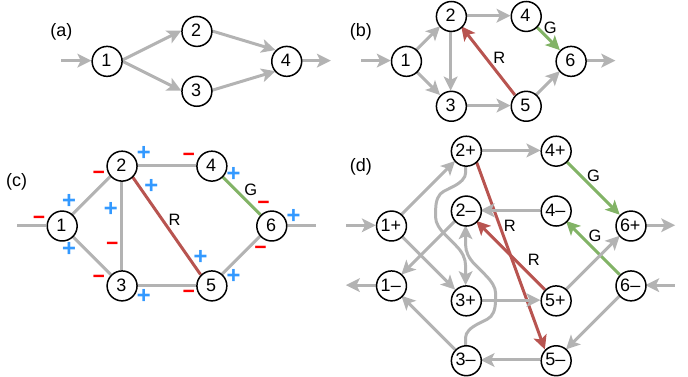}
    \caption{\textbf{(a)} The classic notion of a bubble: a path that diverges and then merges. 
    \textbf{(b)} Due to the cycle $2, 3, 5, 2$, the pair $(1, 6)$ is not a superbubble.
    If the red arc R is removed, then the cycle is broken and $(1, 6)$ is a superbubble.
    If additionally the green arc G is removed, then $4$ is a tip, and hence $(1, 6)$ is not a superbubble.
    \textbf{(c)} Same graph as in (b), but the red edge R has different signs.
    Due to the cycloid $2+, 3+, 5+, 2-$, the pair $\{1+, 6-\}$ is not an ultrabubble.
    If the red edge R is removed, then the cycloid is broken and $\{1+, 6-\}$ is an ultrabubble.
    If additionally the green edge G is removed, then $4$ is a tip, and hence $(1, 6)$ is not an ultrabubble.
    \textbf{(d)} The doubled representation of the graph (c).
    Node $5-$ is reachable from $1+$ but not reverse-reachable from $6+$, hence $(1+, 6+)$ is not a superbubble.
    Symmetrically, node $5+$ is reverse reachable from $1-$ but not reachable from $6-$, hence $(6-, 1-)$ is not a superbubble.
    If the red arcs R are removed, then both $(1+, 6+)$ and $(6-, 1-)$ are superbubbles.
    If additionally the green arcs G are removed, then $4+$ and $4-$ are tips, and hence neither $(1+, 6+)$ nor $(6-, 1-)$ is a superbubble.}
    \label{fig:bubbles}
\end{figure}

One central problem in graph algorithms is clustering~\cite{schaeffer2007graphclustering}.
While there are various concrete definitions of the problem, in general it asks the question how a given graph can be subdivided into subgraphs (clusters) such that for example the connectivity within each cluster is high and the connectivity between clusters is low.
Graph clustering helps process large graphs more efficiently on modern parallel computers~\cite{meyerhenke2017parallel}.
There are various generic graph clustering algorithms~\cite{xue2024graphclusteringsurvey}, however in some application domains, there are also specific meaningful decompositions of graphs.

Consider for example the application domain bioinformatics, which is~\cite{stephens2015bigdata} and will highly likely stay~\cite{katz2022sra} one of the largest data producers in the world.
In bioinformatics, one domain-specific type of cluster is the \emph{bubble}.
A bubble is typically defined in a way that makes it carry biological meaning, and there are various definitions made to capture various biological interpretations of graph regions~\cite{spades,paten2018ultrabubbles,billi,li2024exploring}.
See \Cref{fig:bubbles}~(a) for an example.
Common among the various definitions is typically a single entrance and a single exit node, and the idea that the bubble consists of two or more separate paths from entrance to exit.
Some definitions of bubbles have also been shown to be useful as a clustering method of biological graphs.
For example, a method like \emph{multilevel Dijkstra}~\cite{delling2011customizablerouteplanning} can be used to speed up shortest path queries based on \emph{snarl}s~\cite{chang2020snarlshortestpaths} in biological graphs.
Multilevel Dijkstra precomputes shortcuts that skip a bubble in order to reduce the size of the search space for the shortest path.
The same technique could also be used to speed up \emph{maximum flow} computations in practice.

In this paper we focus on \emph{superbubbles}~\cite{onodera2013detecting} and \emph{ultrabubbles}~\cite{paten2018ultrabubbles} which capture for example variation structures in pangenome graphs~\cite{harviainen2026orientation}.

A \emph{superbubble} $(u, v)$ is an acyclic subgraph of a directed graph that can be entered only via its only source $u$ and exited only via its only sink $v$.
See \Cref{fig:bubbles}~(b) for an example.
The superbubbles of a graph can be enumerated in time $O(n(m + n))$ with the algorithm of Onodera~et~al.~\cite{onodera2013detecting}, where $n$ is the number of nodes and $m$ the number of arcs.
This was later reduced to linear time~\cite{gärtner-revisited} and then simplified by Gärtner~et~al.~\cite{gärtner2019direct}.
The practical relevance of superbubbles is underlined by the fact that they contain biological meaning and occur in the order of millions in biological graphs~\cite{harviainen2026orientation}.

As a generalisation of superbubbles to bidirected graphs, \emph{ultrabubbles} were defined by Paten~et~al.~\cite{paten2018ultrabubbles}.
They are defined like superbubbles, as a pair of signed nodes $\{u\alpha, v\beta\}$ that define an acyclic subgraph that is only connected to the rest of the graph via signed nodes $u\hat{\alpha}$ and $v\hat{\beta}$, where $\hat{+} = -$ and $\hat{-} = +$ is the \emph{opposite} of a sign.
Ultrabubbles are also a useful clustering method for speeding up shortest-path and maximum flow computations, and also occur in the order of millions in biological graphs~\cite{harviainen2026orientation}.
They are actually more applicable in bioinformatics than superbubbles, as sequence graphs in bioinformatics are typically bidirected, due to the reverse complementarity of DNA.

Even though ultrabubbles are a generalisation of superbubbles to bidirected graphs, initial theoretical attempts at enumeration algorithms did not make use of any superbubble algorithm.
Paten~et~al. proposed a simple $O(n(m + n))$-time enumeration algorithm, and Zisis and Sætrom~\cite{zisis2026ultrabubbles} improved that to $O(Kn+n+m)$ time, where $K$ is the number of ``snarls'' in the graph.
A snarl is a structure similar, but more general, than ultrabubbles, that may contain e.g.~cycles.
The first linear-time algorithm enumerating all ultrabubbles of a bidirected graph was proposed by Sena~et~al.~\cite{sena2026spqr-bubbles}.
It is a two-step approach that first computes the SPQR tree as a representation of all 2-cuts on the undirected graph and then identifies which 2-cuts are ultrabubbles.
While the algorithm of Sena~et~al. can also compute superbubbles, it does not draw a connection between superbubbles and ultrabubbles other than that both are based on 2-cuts.
Besides that, it uses two separate bodies of theory resulting in two separate, although similar, algorithms for computing ultrabubbles and superbubbles.

\subsection{Doubling}

The tool BubbleGun~\cite{dabbaghie2022bubblegun} computes weak superbubbles in the bidirected graph by implicitly \emph{doubling} the bidirected graph to transform it into a directed graph.
The doubling operation can be formalised as follows.
Given a bidirected graph, each node $v$ is replaced by a pair of nodes $v+$ and $v-$.
Each edge $\{u\alpha, v\beta\}$ is replaced by a pair of arcs $(u\alpha, v\hat{\beta})$ and $(v\beta, u\hat{\alpha})$.
For example, a bidirected edge $\{a+, b+\}$ would be replaced by two arcs $(a+, b-)$ and $(b+, a-)$.
See \Cref{fig:doubling}~(b) and~(c) for an example of this operation.

Note that the doubling operation maintains connectivity between signed nodes (see \Cref{lem:connectivity} below).
Intuitively, it is also easy to see that for example shortest paths are maintained by doubling if the weights of the bidirected edges are copied to the corresponding inserted arcs.
While not proven formally, this insight was used in practical research already~\cite{matchtigs}.
Also, while not explicitly mentioned in the BubbleGun publication, Harviainen~et~al.~\cite{harviainen2026orientation} later noticed that the output of BubbleGun is almost identical to ultrabubbles.
Given that graph doubling is a well-known, yet almost always implicitly used technique to apply directed graph algorithms to bidirected graphs, this begs the question if this technique can also be used to enumerate ultrabubbles.

\begin{figure}
    \centering
    \includegraphics[scale=1]{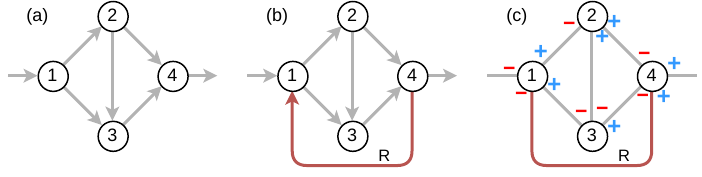}
    \caption{\textbf{(a)} $(1, 4)$ is both a superbubble and a weak superbubble.
    \textbf{(b)} Due to the red arc R, $(1, 4)$ is a weak superbubble but not a superbubble.
    \textbf{(c)} Writing (b) as a bidirected graph, we see that $\{1+, 4-\}$ is an ultrabubble.}
    \label{fig:weak-superbubbles}
\end{figure}

To achieve this, we draw inspiration from another attempt to make use of the fact that ultrabubbles are a generalisation of superbubbles.
Harviainen~et~al.~\cite{harviainen2026orientation} propose a \emph{sign-flipping} orientation algorithm to compute ultrabubbles in a restricted class of bidirected graphs.
The sign-flipping operation flips the signs of all incident edges of a single node.
This has the effect that all reachabilities are maintained and no new ones are created.
Harviainen~et~al. noticed that in an oriented graph, the ultrabubbles are equivalent to \emph{weak} superbubbles as opposed to superbubbles.
A \emph{weak superbubble} $(u, v)$ is a superbubble where the otherwise forbidden arc $(v, u)$ is allowed.
See \Cref{fig:weak-superbubbles} for an example.

In their algorithm, Harviainen~et~al.~\cite{harviainen2026orientation} attempt to orient the bidirected graph into a directed graph by sign-flipping.
However, it is only possible to orient a bidirected graph in this way if it contains no cycle with an odd amount of non-directed edges (i.e.~edges of the form $++$ or $--$)~\cite{harary1953notion,harviainen2026orientation}.
Harviainen~et~al. handle such unorientable graphs by applying their orientation algorithm and inserting a source or sink node into any edge that the algorithm is unable to orient.
Then they apply the algorithm of Gärtner~et~al.~\cite{gärtner2019direct} to compute weak superbubbles in the oriented graph.
They show that their algorithm works on bidirected graphs that contain a cut node or a source or sink.

Besides having the drawback of working only for a restricted class of bidirected graphs, the algorithm by Harviainen~et~al.~\cite{harviainen2026orientation} does not really draw a connection between directed and bidirected graphs, as the orientation approach does not maintain reachabilities. 
Further, due to the handling of conflicts being catered to finding ultrabubbles, it is unclear how this approach generalises to other algorithmic problems.
On the other hand, the doubling approach is already being used successfully in simpler practical cases in order to compute for example shortest paths in bidirected graphs~\cite{matchtigs,chang2020snarlshortestpaths}, and hence there is evidence that it generalises better to other algorithmic problems.
In this paper, we show that via graph doubling, it is simple to enumerate ultrabubbles in bidirected graphs in linear time, when an algorithm to enumerate weak superbubbles in directed graphs in linear time is given.
See \Cref{fig:bubbles}~(c) and~(d) for an example of our technique.
This provides evidence that graph doubling is not just a technique for simple cases, but can also be applied when the reduction from bidirected to directed is not immediately obvious.

\subsection{Main contributions}

\begin{itemize}
    \item 
    We show how to enumerate ultrabubbles by reducing to enumerating weak superbubbles in the doubled graph.
    This results in the \textbf{first linear-time reduction-based algorithm for computing ultrabubbles} without further structural assumptions about the input graph.

    \item
    We show that graph doubling is not just a folklore technique for applying directed graph algorithms to bidirected graphs in simple cases, but that it is powerful enough to be applied also to more complex problems.
    This motivates that \textbf{graph doubling is a generic simplification technique} for algorithms on bidirected graphs.

    \item
    We transform the definitions of ultrabubbles and weak superbubbles into a short set of simple conditions that are equivalent modulo bidirectedness.
    This makes our reduction via doubling especially simple, and motivates that \textbf{our simplified definitions may be of independent interest for future algorithms} involving ultrabubbles or weak superbubbles.

    \item
    We show that modulo bidirectedness, the definitions of ultrabubbles and weak superbubbles are equivalent.
    This \textbf{connects existing theoretical work on ultrabubbles and superbubbles} and motivates that graph doubling is a generic technique for drawing connections between bidirected and directed graphs.
\end{itemize}

\subsection{Organisation of the paper}

For our reduction, we first simplify and harmonise the definitions of both ultrabubbles (\Cref{s:ultrabubbles}) and superbubbles (\Cref{s:superbubbles}), such that their equivalence becomes more obvious.
Then, in \Cref{s:reduction}, we describe the reduction.
Appendix~\ref{apx:proofs} contains proofs omitted from the main matter.

\section{Preliminaries}

We use standard definitions of directed and bidirected graphs, and use the same terms for bidirected and directed definitions.

\textbf{Directed graphs.}
A \emph{directed graph} $G = (V, E)$ has a set of nodes $V = V(G)$ and a set of (directed) arcs $E = E(G)$.
An \emph{arc} is a tuple $(u, v)$, where $u$ and $v$ are nodes.
The \emph{induced subgraph} $G[U]$ in $G$ of the set of nodes $U \subseteq V(G)$ is the graph $G[U] := (U, \{(u, v) \in E(G) \mid u, v \in U\})$.
The \emph{reverse graph} of $G$ is $\overleftarrow{G} := (V(G), \{(u, v) \mid (v, u) \in E(G)\})$.
A \emph{$v_1$-$v_\ell$-walk} in a directed graph is a sequence of nodes $v_1, \dots, v_\ell$ such that $(v_i, v_{i+1}) \in E$.
A \emph{cycle} is a $v$-$v$ walk (possibly length zero) for any $v \in V$.
An \emph{open path} (or just \emph{path}) is a walk that repeats no node.
A \emph{closed path} is a cycle that repeats no node, except that the start and end are the same node.
A walk, cycle, open or closed path is \emph{proper} if it contains at least one arc (possibly a self-loop).
A \emph{tip} is a node with only incoming or only outgoing arcs.

\textbf{Bidirected graphs.}
A \emph{bidirected graph} $G = (V, E)$ has a set of nodes $V = V(G)$ and a set of bidirected edges $E = E(G)$.
A \emph{sign} is a symbol $\alpha \in \signs$.
The \emph{opposite sign} $\hat{\alpha}$ of $\alpha$ is defined as $\hat{+} = -$ and $\hat{-} = +$.
A \emph{signed node} is a pair $(v, \alpha)$ with $v \in V$ and $\alpha \in \signs$.
We concisely write a signed node as $v\alpha$.
A \emph{bidirected edge} (or simply just \emph{edge} is an unordered pair of signed nodes $\{u\alpha, v\beta\}$.
A \emph{$v_1\alpha_1$-$v_\ell\alpha_\ell$-walk} is a sequence of signed nodes $v_1\alpha_1, \dots, v_\ell\alpha_\ell$ such that $\{v_i\alpha_i, v_{i+1}\hat{\alpha}_{i+1}\} \in E$.
(Note that our definition of bidirected walks is written to resemble directed walks, and to avoid special cases we use the convention that the sign of nodes in a walk is always the sign with which the walk leaves the node.
Even if the walk never leaves the last node $v_\ell$, we say that it ends with the leaving sign $\alpha_\ell$.
Typically, definitions of bidirected walks handle the last node the other way around, and say that a walk ends with the incoming sign $\hat{\alpha_\ell}$.
See \Cref{fig:doubling}~(b) for an example.)
A \emph{cycloid} is a $v\alpha$-$v\beta$-walk for any $v \in V$ and where $\alpha$ and $\beta$ may or may not be equal.
A \emph{cycle} is a cycloid with $\alpha = \beta$.
A \emph{path} is a walk that repeats no node (i.e. for each node $v$, it may contain at most one of $v+$ and $v-$).
A \emph{tip} is a node with edges incident to only one of its signed nodes.

\textbf{The doubling operation.}
The \emph{doubling operation} transforms a bidirected graph $G = (V, E)$ into a directed graph $G' = (V', E')$.
For each node $v \in V$, we add two signed nodes $v+, v-$ into $V'$.
For each edge $\{u\alpha, v\beta\} \in E$, we add two arcs $(u\alpha, v\hat{\beta})$ and $(v\beta, u\hat{\alpha})$ into $E'$.
See \Cref{fig:doubling}~(b) and~(c) for an example.
The following lemma proves that the doubling operation maintains connectivity (see \Cref{apx:proofs} for a formal proof).

\begin{restatable}{lemma}{lemconnectivity}
    \label{lem:connectivity}
    Let $G$ be a bidirected graph and $G'$ the corresponding doubled graph.
    Then $v_1\alpha_1, \dots, v_\ell\alpha_\ell$ is a walk in $G$ if and only if $v_1\alpha_1, \dots, v_\ell\alpha_\ell$ is a walk in $G'$.
\end{restatable}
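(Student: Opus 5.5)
The plan is to reduce the statement to a single consecutive pair of the walk. By the definitions, $v_1\alpha_1, \dots, v_\ell\alpha_\ell$ is a walk in $G$ exactly when $\{v_i\alpha_i, v_{i+1}\hat{\alpha}_{i+1}\} \in E$ for every $1 \le i < \ell$. It is a walk in $G'$ exactly when $(v_i\alpha_i, v_{i+1}\alpha_{i+1}) \in E'$ for every $1 \le i < \ell$. Walks of length zero (a single signed node) are walks in both graphs trivially, since $v_1\alpha_1$ is a node of $G'$ by construction. It therefore suffices to prove the local claim that, for all signed nodes $u\alpha$ and $v\beta$,
\[
\{u\alpha, v\hat{\beta}\} \in E \iff (u\alpha, v\beta) \in E'.
\]
The lemma then follows by applying this equivalence to each index $i$ simultaneously.

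For the forward direction, suppose $\{u\alpha, v\hat{\beta}\} \in E$. The doubling operation inserts the arc $(u\alpha, v\hat{\hat{\beta}}) = (u\alpha, v\beta)$, using $\hat{\hat{\beta}} = \beta$. So the arc is in $E'$.

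For the backward direction, suppose $(u\alpha, v\beta) \in E'$. By construction, every arc of $E'$ is produced by some edge $\{x\gamma, y\delta\} \in E$, in one of two ways.
\begin{itemize}
\item It is the arc $(x\gamma, y\hat{\delta})$. Then $u\alpha = x\gamma$ and $v\beta = y\hat{\delta}$, so $y\delta = v\hat{\beta}$, and the edge is $\{u\alpha, v\hat{\beta}\}$.
\item It is the arc $(y\delta, x\hat{\gamma})$. Then $u\alpha = y\delta$ and $x\gamma = v\hat{\beta}$, so again the edge is $\{u\alpha, v\hat{\beta}\}$, because edges are unordered pairs.
\end{itemize}

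The only point needing care is the backward direction. Here one must use that $E'$ contains no arcs besides those generated from edges, and handle both generated arcs of an edge. Self-loops such as $\{v+, v+\}$ produce the same arc twice, and $\{v+, v-\}$ produces two distinct arcs $(v+, v+)$ and $(v-, v-)$. The case analysis above covers these uniformly, since it never assumes the two endpoints are distinct. I expect no genuine obstacle; the main risk is a sign slip arising from the paper's convention that walk signs are leaving signs.
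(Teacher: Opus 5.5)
Your proposal is correct and follows the same route as the paper: both check, for each consecutive pair, that $\{v_i\alpha_i, v_{i+1}\hat{\alpha}_{i+1}\} \in E$ holds exactly when $(v_i\alpha_i, v_{i+1}\alpha_{i+1}) \in E'$. Your case analysis over the two arcs generated by an edge, including self-loops, spells out the backward direction that the paper only asserts in one line.
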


\begin{observation}
    \label{obs:doubling}
    The doubling operation can be performed in $O(n + m)$ time.
\end{observation}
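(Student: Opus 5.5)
The statement to prove is \Cref{obs:doubling}: the doubling operation runs in $O(n + m)$ time, where $n = |V|$ and $m = |E|$. The plan is a direct accounting of the construction as defined in the paragraph on the doubling operation. We assume the usual representation: adjacency lists or an edge list of signed incidences, with nodes indexed $0,\dots,n-1$.

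First we build $V'$. For each $v \in V$ we create two signed nodes $v+$ and $v-$, stored for instance at indices $2v$ and $2v+1$. Computing the index of any signed node then takes constant time. This step costs $O(n)$ time and yields $|V'| = 2n$. We also initialise an empty adjacency list for each of the $2n$ nodes, again in $O(n)$ time.

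Second we build $E'$. We iterate once over $E$, and for each edge $\{u\alpha, v\beta\}$ we compute the opposite signs $\hat{\alpha}$ and $\hat{\beta}$ in constant time. We then append the arc $(u\alpha, v\hat{\beta})$ to the list of $u\alpha$ and the arc $(v\beta, u\hat{\alpha})$ to the list of $v\beta$. Each append is $O(1)$, so this step costs $O(m)$ time and yields $|E'| \le 2m$.

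The only point requiring care is the case $u\alpha = v\beta$, i.e.\ a self-loop of type $++$ or $--$. There the two generated arcs coincide, and likewise for directed self-loops in the multigraph sense. We either keep the parallel copy, which is harmless for the linear bound since it at most doubles the arc count, or deduplicate in $O(1)$ per edge by comparing the two endpoints; this matches the collapse described in \Cref{fig:doubling}. No other deduplication is needed, since distinct input edges yield arcs that are distinct as multiset elements.

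Summing the two steps gives $O(n + m)$ total time and output size. I expect no real obstacle here.
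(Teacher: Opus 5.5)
Your direct accounting ($2n$ signed nodes, at most $2m$ arcs, $O(1)$ work per node and per edge) is correct and is exactly the routine argument the paper leaves implicit, since it states this as an observation without proof. One small slip that does not affect the bound: a directed self-loop $\{u+,u-\}$ yields the two distinct arcs $(u+,u+)$ and $(u-,u-)$, so only $++$ and $--$ self-loops produce coinciding arcs.
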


\section{Ultrabubbles}
\label{s:ultrabubbles}

\begin{figure}
    \centering
    \includegraphics[scale=1]{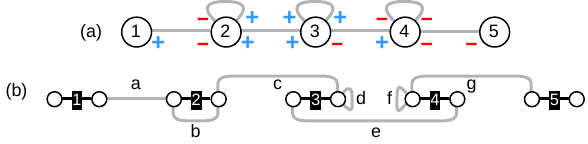}
    \caption{The biedged representation of a bidirected graph.
    \textbf{(a)} Bidirected graph.
    A walk alternates signs at each node.
    For example, $W := 1+, 2+, 2+, 3-, 4-, 4+, 3+, 3-, 4-, 5+$ is a walk.
    \textbf{(b)} Biedged graph.
    Nodes are subdivided into two and connected by a black edge.
    The left side represents incoming $-$ edges and the right side represents outgoing $+$ edges.
    Edges are represented as grey edges, and connected to the nodes according to their signs.
    A walk is an undirected walk that alternates between black and grey edges.
    $W$ is represented by the edge sequence $a, 2, b, 2, c, 3, e, 4, f, 4, e, 3, d, 3, e, 4, g, 5$.}
    \label{fig:biedged}
\end{figure}

\begin{figure}
    \centering
    \includegraphics[scale=1]{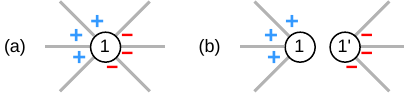}
    \caption{The splitting operation performed on signed node $1+$.}
    \label{fig:splitting}
\end{figure}

Ultrabubbles are defined in bidirected graphs.
While ultrabubbles were first defined by Paten~et~al.~\cite{paten2018ultrabubbles}, their definition works by first transforming the bidirected graph into a biedged graph.
See \Cref{fig:biedged} for an example of the biedged representation of a bidirected graph.
For simplicity, we use the equivalent definition in bidirected graphs given by Sena~et~al.~\cite{sena2025}.
For its definition, we first need to define the \emph{splitting} operation.
The splitting operation is applied to a signed node $v\alpha$ in a bidirected graph.
It inserts a new node $v'$ and replaces all signed nodes $v\hat{\alpha}$ in all edges with $v'\hat{\alpha}$.
See \Cref{fig:splitting} for an example of the splitting operation, and \Cref{fig:bubbles}~(c) for examples of the definition of ultrabubbles.

\begin{definition}[Ultrabubble and ultrabubble component~\cite{paten2018ultrabubbles,sena2025}]
    \label{def:ultrabubble-complex}
    Let $G$ be a bidirected graph. Let $\{u\alpha, v\beta\}$ be a pair of signed nodes with distinct $u, v \in V(G)$ and $\alpha, \beta \in \signs$. Then $\{u\alpha, v\beta\}$ is an \emph{ultrabubble} if:
    \begin{enumerate}[nosep]
        \item[(a)] \emph{separable:}
        the graph created by splitting $u\alpha$ and $v\beta$ contains a separate component $B \subseteq G$ containing $u$ and $v$ but not $u'$ and $v'$.
        We call $B$ the \emph{ultrabubble component} of $\{u\alpha, v\beta\}$.
        
        \item[(b)] \emph{tipless:} no node in $V(B) \setminus \{u, v\}$ is a tip.
        
        \item[(c)] \emph{acyclic:} $B$ contains no cycloid.

        \item[(d)] \emph{minimal:}
        no signed node $w\gamma$ with node $w \in V(B) \setminus \{u,v\}$ is such that $\{u\alpha, w\gamma\}$ and $\{w\hat{\gamma}, v\beta\}$ are separable.
    \end{enumerate}
\end{definition}

\begin{figure}
    \centering
    \includegraphics[scale=1]{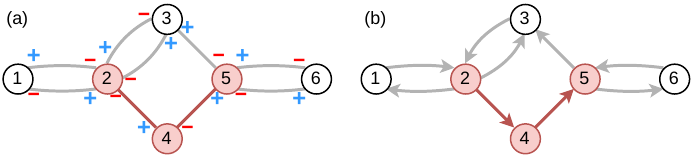}
    \caption{\textbf{(a)} By \Cref{def:ultrabubble}, the ultrabubble component of $\{2+, 5-\}$ (which is not an ultrabubble) contains the nodes $2$, $4$ and $5$ and edges $\{2-, 4+\}$ and $\{4-, 5+\}$ highlighted in red.
    All other nodes and edges are not on $2+$-$5+$-paths.
    \textbf{(b)} By \Cref{def:superbubble}, the weak superbubble component of $(2, 5)$ (which is not a weak superbubble) contains the nodes $2$, $4$ and $5$ and arcs $(2, 4)$ and $(4, 5)$ highlighted in red.
    All other nodes and arcs are not on $2$-$5$-paths.}
    \label{fig:components}
\end{figure}

We simplify this definition as follows.
See \Cref{fig:components}~(a) for an example of an ultrabubble component.

\begin{definition}[Ultrabubble and ultrabubble component (simplified)]
    \label{def:ultrabubble}
    Let $G$ be a bidirected graph. Let $\{u\alpha, v\beta\}$ be a pair of signed nodes with distinct $u, v \in V(G)$ and $\alpha, \beta \in \signs$.
    The \emph{ultrabubble component} $B$ of $\{u\alpha, v\beta\}$ is the graph consisting of $u$, $v$ and all nodes and edges on $u\alpha$-$v\hat{\beta}$-paths.
    Then $\{u\alpha, v\beta\}$ is an \emph{ultrabubble} if:
    \begin{enumerate}[nosep]
        \item[(a)] \emph{separable:} signed nodes $w\gamma$ are only contained in edges outside of $B$ if $w \notin B$ or if $w\gamma = u\hat{\alpha}$ or $w\gamma = v\hat{\beta}$.
        
        \item[(b)] \emph{acyclic:} $B$ contains no cycloid.

        \item[(c)] \emph{minimal:}
        no signed node $w\gamma$ with node $w \in V(B) \setminus \{u,v\}$ is such that $\{u\alpha, w\gamma\}$ and $\{w\hat{\gamma}, v\beta\}$ satisfy above properties (a) and (b).
    \end{enumerate}
\end{definition}

For proving the equivalence of the definitions, we first prove that: if $\{u\alpha, v\beta\}$ is an ultrabubble by the original definition, then the ultrabubble components are equivalent between the definitions.
Note that the simplified definition defines the ultrabubble component independently of $\{u\alpha, v\beta\}$ being an ultrabubble, and hence we can prove the following statement without proving the equivalence of the ultrabubble definitions first.

\begin{lemma}
    \label{lem:ultrabubble-component}
    Let $G$ be a bidirected graph and let $\{u\alpha, v\beta\}$ be an ultrabubble by \Cref{def:ultrabubble-complex}.
    Then the ultrabubble component $B$ of $\{u\alpha, v\beta\}$ by \Cref{def:ultrabubble-complex} is equivalent to the ultrabubble component $B'$ of $\{u\alpha, v\beta\}$ by \Cref{def:ultrabubble}.
\end{lemma}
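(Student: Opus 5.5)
We show $B \subseteq B'$ and $B' \subseteq B$ separately, comparing nodes and edges. Throughout, we identify $B$ (a component of the split graph) with the subgraph of $G$ obtained by renaming $u'$ to $u$ and $v'$ to $v$. Because $u'$ and $v'$ lie outside $B$, the edges of $B$ are exactly the edges of $G$ that avoid $u\hat{\alpha}$ and $v\hat{\beta}$ and are reachable from $u$ or $v$ without passing through those two signed nodes.

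\textbf{$B' \subseteq B$.} Take a $u\alpha$-$v\hat{\beta}$-path $P$ in $G$. Since $P$ repeats no node, $u$ and $v$ occur only as its endpoints, so $P$ leaves $u$ via $u\alpha$ and enters $v$ via $v\beta$. Every inner node $w \neq u,v$ is untouched by splitting. Hence every edge of $P$ survives unchanged in the split graph, and $P$ is a walk there that starts at $u$ and never uses $u'$ or $v'$. Being connected, it lies in the component containing $u$, which is $B$. So every node and edge of $P$ is in $B$, and trivially $u,v \in B$.

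\textbf{$B \subseteq B'$.} This is the main obstacle. Here we need all three conditions of \Cref{def:ultrabubble-complex}, namely tipless, acyclic and separable. The first step is to take an arbitrary edge $e=\{x\gamma, y\delta\}$ of $B$ and build a bidirected walk through it. Starting from $x\gamma$, we extend forward as follows. Every node $w \in V(B)\setminus\{u,v\}$ has edges at both signs (tipless), and all its edges lie in $B$ (separable). So whenever the walk enters $w$ on one sign, it can leave on the opposite sign through an edge of $B$.

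The second step is to argue that this extension terminates. The walk cannot go on forever, because $B$ is finite and contains no cycloid. Moreover, it can only stop at $u$ or $v$. Entering $u$ at $u\hat{\alpha}$ is impossible inside $B$, so the walk must enter $u$ via $u\alpha$, and symmetrically it enters $v$ via $v\beta$. We extend backward from $e$ in the same way.

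The third step is to exclude the bad pairings of endpoints. This leaves a walk whose endpoints lie in $\{u,v\}$, and we must rule out a $u$-to-$u$ or $v$-to-$v$ walk. Such a walk starts at $u$ leaving on $u\alpha$ and returns to $u$ entering via $u\alpha$. That is a cycloid $u\alpha,\dots,u\hat{\alpha}$ in $B$ under the paper's sign convention, which contradicts acyclicity. Hence the walk runs from $u\alpha$ to $v$ entering at $v\beta$, i.e. it is a $u\alpha$-$v\hat{\beta}$-walk. It repeats no node, since a repeated node would give a cycloid. So it is a path containing $e$, and $e \in B'$.

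For nodes, every node of $B$ other than $u,v$ is non-tip, so it has an incident edge and lies on such a path. The nodes $u$ and $v$ are in $B'$ by definition. This completes $B \subseteq B'$. The delicate points are tracking the sign convention for the final node in the cycloid argument and handling self-loops at inner nodes, which the acyclicity condition again excludes.
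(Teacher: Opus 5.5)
Your proof is correct and uses the same ingredients as the paper. Your $B' \subseteq B$ is the paper's separability argument: a $u\alpha$-$v\hat{\beta}$-path meets $u$ and $v$ only through $u\alpha$ and $v\beta$, so it survives the splitting and stays in $B$. Your $B \subseteq B'$ rests on the same maximal-extension idea. Tiplessness keeps the walk alive at inner nodes, and acyclicity forces termination at $u$ or $v$, excludes the $u$-$u$ and $v$-$v$ pairings, and makes the result a path.

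The difference is in how $B \subseteq B'$ is decomposed. The paper runs the extension from a node $x$, using maximal walks out of $x+$ and $x-$. It then treats each edge $\{a\gamma, b\delta\}$ by a separate splicing step: it concatenates paths through $a$ and through $b$ across the edge, with a case analysis on the orientations in which $a$ and $b$ are traversed, and excludes the aligned cases by a cycloid. You instead run the extension directly from both ends of an arbitrary edge. One argument then puts every edge on a $u\alpha$-$v\hat{\beta}$-path, and node membership follows because every inner node of the connected component $B$ has an incident edge. This is more uniform and avoids the splicing case analysis altogether.

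One detail is left implicit: the walk you build may run from $v\beta$ to $u\hat{\alpha}$. In that case reverse it, using that the reverse of a walk $v_1\alpha_1,\dots,v_\ell\alpha_\ell$ is the walk $v_\ell\hat{\alpha}_\ell,\dots,v_1\hat{\alpha}_1$.
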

\begin{proof}
    It holds that $u, v \in B$ and $u, v \in B'$.

    $(\Rightarrow)$
    Let $x \in V(B) \setminus \{u, v\}$.
    Since $B$ is tipless, $x$ has at least one incident edge with positive and at least one incident edge with negative sign.
    And since the only tips in $B$ are $u$ and $v$ and $B$ is acyclic, any maximal walk from $x+$ must end in $u\hat{\alpha}$ or $v\hat{\beta}$.
    For the same reasons, any maximal walk from $x-$ must end in $u\hat{\alpha}$ or $v\hat{\beta}$.
    Since $B$ is acyclic, if $x+$ reaches $u\hat{\alpha}$ then $x-$ must reach $v\hat{\beta}$, and vice versa (if they both reach the same, then there would be a cycloid).
    Hence, $x$ is on a $u\alpha$-$v\hat{\beta}$-walk $W$.
    And if $W$ repeats any node (with the same or differing sides), then $B$ would contain a cycloid.
    Therefore, $W$ is a path.
    Hence, $x$ is on a $u\alpha$-$v\hat{\beta}$-path, and therefore $x \in B'$.

    Let $\{a\gamma, b\delta\} \in E(B)$.
    Since all nodes in $B$ are on $u\alpha$-$v\hat{\beta}$-paths, also the endpoints of $e$ are on such paths.
    If there is one such path $P_1$ that contains $a\gamma$ and one such path $P_2$ that contains $b\hat{\delta}$ (we allow $P_1 = P_2$), then we can take $P_1$ until $a\gamma$ and append $P_2$ starting from $b\hat{\delta}$ to create a $u\alpha$-$v\hat{\beta}$-path that contains $\{a\gamma, b\delta\}$.
    Symmetrically, if there is a $u\alpha$-$v\hat{\beta}$-path that contains $a\hat{\gamma}$ and an $u\alpha$-$v\hat{\beta}$-path that contains $b\delta$, then there is a $u\alpha$-$v\hat{\beta}$-path that contains $\{a\gamma, b\delta\}$.
    On the other hand, if there are two $u\alpha$-$v\hat{\beta}$-paths that contain $a\gamma$ and $b\delta$, or if there are two $u\alpha$-$v\hat{\beta}$-paths that contain $a\hat{\gamma}$ and $b\hat{\delta}$, then these form a cycloid starting in $u\alpha$ and containing $\{a\gamma, b\delta\}$, which contradicts $B$ being acyclic.
    Hence, $\{a\gamma, b\delta\}$ is on a $u\alpha$-$v\hat{\beta}$-path, and therefore $\{a\gamma, b\delta\} \in B'$.

    $(\Leftarrow)$
    Let $x \in B'$ be an edge or a node such that $x \neq u$ and $x \neq v$.
    Then by definition, $x$ is on a $u\alpha$-$v\hat{\beta}$-path.
    And by definition of bidirected paths, such a path cannot contain $u\hat{\alpha}$ or $v\beta$.
    Thus, because $\{u\alpha, v\beta\}$ is separable, the path cannot leave $B$.
    Therefore, $x \in B$.
\end{proof}

\begin{theorem}[Equivalence of ultrabubble definitions]
    \label{thm:ultrabubble-definition-equivalence}
    Let $G$ be a bidirected graph.
    Let $\{u\alpha, v\beta\}$ be a pair of signed nodes.
    Then $\{u\alpha, v\beta\}$ is an ultrabubble by \Cref{def:ultrabubble-complex} if and only if it is an ultrabubble by \Cref{def:ultrabubble}.
    If $\{u\alpha, v\beta\}$ is an ultrabubble, then the corresponding ultrabubble components are equivalent between the two definitions.
\end{theorem}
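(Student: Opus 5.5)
The plan is to prove the two implications separately. Both rest on \Cref{lem:ultrabubble-component}, which states that whenever \Cref{def:ultrabubble-complex} holds, the two components coincide. That lemma also gives the final sentence of the theorem immediately once the equivalence itself is shown.

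\textbf{Forward direction.} Assume $\{u\alpha, v\beta\}$ satisfies \Cref{def:ultrabubble-complex} with component $B$. By \Cref{lem:ultrabubble-component}, $B = B'$. Each condition of \Cref{def:ultrabubble} then follows from the corresponding old condition.
\begin{itemize}
\item \emph{Separability.} After splitting $u\alpha$ and $v\beta$, $B$ is a separate component. So any edge outside $B$ that touches a node $w \in V(B)$ must have been reattached to $u'$ or $v'$. Such an edge therefore contains exactly $u\hat{\alpha}$ or $v\hat{\beta}$, which is (a).
\item \emph{Acyclicity.} This transfers verbatim, since $B = B'$.
\item \emph{Minimality.} Suppose some $w\gamma$ violated the new minimality condition (c). I will show that the new properties (a) and (b) for $\{u\alpha, w\gamma\}$ imply old separability of $\{u\alpha, w\gamma\}$, and likewise for $\{w\hat{\gamma}, v\beta\}$. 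Then $w\gamma$ would contradict the old minimality (d).
\end{itemize}

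The key auxiliary claim, used in both directions, is this: \emph{if $\{x\gamma, y\delta\}$ satisfies new (a), then $B'$ is exactly a connected component after splitting $x\gamma$ and $y\delta$, and this component does not contain $x'$ or $y'$.} To prove it, note first that $B'$ is connected, because its nodes lie on $x\gamma$-$y\hat{\delta}$-paths. Next, every edge leaving $V(B')$ contains some signed node $w\gamma'$ with $w \in V(B')$. By (a), that signed node is $x\hat{\gamma}$ or $y\hat{\delta}$, and the splitting moves exactly these incidences to $x'$ and $y'$. Finally, $x$ and $y$ lie in $B'$ by definition.

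\textbf{Backward direction.} Assume \Cref{def:ultrabubble} holds with component $B'$. By the auxiliary claim, $B'$ is the separate component required by old (a), so $B = B'$. Old acyclicity is then immediate.

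For old tiplessness, let $x \in V(B') \setminus \{u, v\}$. Then $x$ lies on some $u\alpha$-$v\hat{\beta}$-path. Since $x$ is an internal node of that path, the path enters $x$ with one sign and leaves with the other. Hence $x$ has incident edges at both $x+$ and $x-$, so it is not a tip.

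For old minimality, suppose $w\gamma$ makes both $\{u\alpha, w\gamma\}$ and $\{w\hat{\gamma}, v\beta\}$ separable in the old sense. I need to show they satisfy new (a) and (b), which contradicts new (c).
\begin{itemize}
\item \emph{New (b).} Their old components are subgraphs of $B$, and $B$ has no cycloid. I expect this to follow by arguing that each such component is contained in $B$ (the component of $u$ after splitting cannot escape $B$ except through $v\hat\beta$ or $u\hat\alpha$, which are split off or not reachable).
\item \emph{New (a).} The new components are contained in the old components. This containment uses the same argument as the $(\Leftarrow)$ part of \Cref{lem:ultrabubble-component}, which used only separability. Old separability then yields new (a) for these pairs. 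The point needing care is when an old sub-component is strictly larger than the path-defined one; this is allowed, since new (a) only restricts edges outside that touch nodes of the new component.
\end{itemize}

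\textbf{Main obstacle.} I expect the main difficulty to be the minimality transfer in both directions. Sub-pairs need not be full ultrabubbles, so \Cref{lem:ultrabubble-component} does not apply to them directly. The components must instead be compared by hand, using only separability together with acyclicity of the enclosing $B$. I would first try to show that the old component of a separable sub-pair inside $B$ equals its path-defined component, using acyclicity of $B$ to reproduce the tip and cycloid argument of \Cref{lem:ultrabubble-component} without assuming tiplessness.
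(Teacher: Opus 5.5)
\textbf{Verdict: genuine gap.} Your plan follows the paper's structure, but old minimality in the backward direction has a gap. You write that the old component $B_1$ of $\{u\alpha,w\gamma\}$ may be strictly larger than its path-defined component $B_1'$ without harming new (a). That is false. $B_1$ is connected and contains $B_1'$, so if $B_1\neq B_1'$, some edge of $B_1$ outside $B_1'$ is incident to a node of $B_1'$. That edge lies in the component $B_1$ of the split graph, so the incidence is at a signed node other than $u\hat{\alpha}$ and $w\hat{\gamma}$ (those moved to $u'$ and $w'$). New (a) forbids exactly this.

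So you must prove $B_1=B_1'$. Separability alone cannot give this; you need the acyclicity and path structure of $B$. Your closing paragraph anticipates this but does not carry it out. The paper is no help here: it merely asserts that the sub-pairs inherit (a) and (b). New (b) is the easy half. A $u\alpha$-$w\hat{\gamma}$-path can leave $B$ only through $v\hat{\beta}$, and then cannot return without repeating $u$ or $v$. Your parenthetical argument via old components does not work as written, because $v$ is not split here.

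\textbf{One way to close the gap.}
\begin{itemize}
\item Since $B$ has no cycloid, each $x\in V(B)\setminus\{u,v\}$ is left with the same sign $\sigma_x$ on every $u\alpha$-$v\hat{\beta}$-path. Two paths leaving $x$ with different signs splice into a $u\alpha$-$u\hat{\alpha}$-walk. So $B$ is in effect a DAG with $u$ as its only source and $v$ as its only sink.
\item If $\gamma=\sigma_w$, a directed $u$-$w$-path joins $u$ to $w'$, so $\{u\alpha,w\gamma\}$ is not separable. Hence $\gamma=\hat{\sigma}_w$. Then $w'$ carries the out-edges of $w$ and is joined to $v$, so $v\notin B_1$ and $B_1\subseteq B$.
\item For $x\in V(B_1)$, every directed $x$-$v$-path must pass through $w$, since otherwise $v\in B_1$. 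So $x$ lies on a directed $u$-$w$-path, i.e.\ a $u\alpha$-$w\hat{\gamma}$-path. Edges are handled the same way.
\end{itemize}
This gives $B_1=B_1'$ and hence new (a). The pair $\{w\hat{\gamma},v\beta\}$ is symmetric.

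\textbf{Forward direction.} It is sound, and its minimality step takes a cleaner route than the paper's. The paper argues via a $u\alpha$-$v\hat{\beta}$-path avoiding $w$. You instead show once that new (a) implies old separability.

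\textbf{Hidden assumption in the auxiliary claim.} It assumes an $x\gamma$-$y\hat{\delta}$-path exists, because $x$ and $y$ belong to $B'$ by fiat. In the forward direction this is automatic. Without such a path, new (a) for a sub-pair would require $u\alpha$ (resp.\ $v\beta$) to lie in no edge, yet both have incident edges because $B$ is a connected component containing $u\neq v$. In the backward direction it is a real exception. Read cycloids as proper and take two isolated nodes $u,v$: then $\{u+,v+\}$ satisfies \Cref{def:ultrabubble} vacuously but fails \Cref{def:ultrabubble-complex}. The paper's proof overlooks this too, so the statement needs the existence of a $u\alpha$-$v\hat{\beta}$-path as a hypothesis.
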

\begin{proof}
    $(\Rightarrow)$
    Let $\{u\alpha, v\beta\}$ be an ultrabubble by \Cref{def:ultrabubble-complex} and let $B$ be its ultrabubble component.
    Let $B'$ be its ultrabubble component by \Cref{def:ultrabubble}.
    By \Cref{lem:ultrabubble-component}, it holds that $B = B'$.

    \begin{enumerate}
        \item[(a)]
        Since $\{u\alpha, v\beta\}$ is separable by \Cref{def:ultrabubble-complex} and $B = B'$, it holds that signed nodes $w\gamma$ are only contained in edges outside of $B$ if $w \notin B$ or if $w\gamma = u\hat{\alpha}$ or $w\gamma = v\hat{\beta}$.
        Therefore, it holds that $\{u\alpha, v\beta\}$ is separable by \Cref{def:ultrabubble}.
    
        \item[(b)]
        Further, because $B = B'$ and $B$ is acyclic by \Cref{def:ultrabubble-complex}, it follows $B'$ is acyclic by \Cref{def:ultrabubble}.

        \item[(c)]
        Finally, $\{u\alpha, v\beta\}$ is minimal by \Cref{def:ultrabubble-complex}, i.e.~no signed node $w\gamma$ with node $w \in V(B) \setminus \{u, v\}$ exists such that $\{u\alpha, w\gamma\}$ and $\{w\hat{\gamma}, v\beta\}$ are separable by \Cref{def:ultrabubble-complex}.
        Therefore, since $B = B'$, for any $w\gamma$ with node $w \in V(B) \setminus \{u, v\}$ it holds that there is a $u\alpha$-$v\hat{\beta}$-path without using the node $w$.
        But such a path implies that the ultrabubble components of both $\{u\alpha, w\gamma\}$ and $\{w\hat{\gamma}, v\beta\}$ by \Cref{def:ultrabubble} would not be separable by \Cref{def:ultrabubble}.
        Hence, $\{u\alpha, v\beta\}$ is minimal by \Cref{def:ultrabubble}.
    \end{enumerate}

    Therefore, $\{u\alpha, v\beta\}$ is an ultrabubble by \Cref{def:ultrabubble} with the same ultrabubble component as by \Cref{def:ultrabubble-complex}.

    $(\Leftarrow)$
    Let $\{u\alpha, v\beta\}$ be an ultrabubble by \Cref{def:ultrabubble} and $B'$ its ultrabubble component.

    \begin{enumerate}
        \item[(a)]
        First, because $\{u\alpha, v\beta\}$ is separable by \Cref{def:ultrabubble}, it holds that edges outside of $B'$ contain only signed nodes outside of $B'$ or $u\hat{\alpha}$ or $v\hat{\beta}$.
        Further, by definition of $B'$, all edges in $B'$ contain only signed nodes inside $B'$.
        Hence, $\{u\alpha, v\beta\}$ is separable by \Cref{def:ultrabubble-complex} and $B' = B$, where $B$ is its ultrabubble component by \Cref{def:ultrabubble-complex}.

        \item[(b)]
        Next, by definition of $B'$, it contains no tips except for $u$ and $v$, and hence $B$ is tipless by \Cref{def:ultrabubble-complex}.

        \item[(c)]
        Further, because $B' = B$ and $B'$ is acyclic by \Cref{def:ultrabubble}, it follows that $B$ is acyclic by \Cref{def:ultrabubble-complex}.

        \item[(d)]
        Finally, $\{u\alpha, v\beta\}$ is minimal by \Cref{def:ultrabubble}, i.e.~no signed node $w\gamma$ with node $w \in V(B) \setminus \{u, v\}$ exists such that $\{u\alpha, w\gamma\}$ and $\{w\hat{\gamma}, v\beta\}$ fulfil conditions (a) and (b) by \Cref{def:ultrabubble}.
        Assume for a contradiction that $\{u\alpha, w\gamma\}$ and $\{w\hat{\gamma}, v\beta\}$ were separable by \Cref{def:ultrabubble-complex}.
        Then, because $\{u\alpha, v\beta\}$ is separable by \Cref{def:ultrabubble} it also holds that $\{u\alpha, w\gamma\}$ and $\{w\hat{\gamma}, v\beta\}$ are separable by \Cref{def:ultrabubble}.
        Further, since $\{u\alpha, v\beta\}$ is acyclic by \Cref{def:ultrabubble} it also holds that $\{u\alpha, w\gamma\}$ and $\{w\hat{\gamma}, v\beta\}$ are acyclic by \Cref{def:ultrabubble}.
        But then, $\{u\alpha, v\beta\}$ would not be minimal by \Cref{def:ultrabubble}, which contradicts the premise.
        Hence, at least one of $\{u\alpha, w\gamma\}$ and $\{w\hat{\gamma}, v\beta\}$ is not separable by \Cref{def:ultrabubble-complex}, and therefore $\{u\alpha, v\beta\}$ is minimal by \Cref{def:ultrabubble-complex}.
    \end{enumerate}

    Therefore, $\{u\alpha, v\beta\}$ is an ultrabubble by \Cref{def:ultrabubble-complex} with the same ultrabubble component as by \Cref{def:ultrabubble}.
\end{proof}

\section{Weak superbubbles}
\label{s:superbubbles}

Weak superbubbles are defined in directed graphs.
We are using Definition~2 of Gärtner~et~al.~\cite{gärtner2019direct}, which we reformulated into our terms.
See \Cref{fig:bubbles}~(b) for an example of a (weak) superbubble.

\begin{definition}[Weak superbubble and weak superbubble component~\cite{gärtner2019direct}]
    \label{def:superbubble-original}
    Let $G$ be a directed graph and $U \subseteq V(G)$.
    Let $(u, v)$ be an ordered pair of distinct nodes of $U$ such that $U = U_{uv} = U_{vu}^+$.
    Then $(u, v)$ is a \emph{weak superbubble} with \emph{weak superbubble component} $U$ if:
    \begin{enumerate}[nosep]
        \item[(a)] every $x \in U$ is reachable from $u$,
        \item[(b)] every $x \in U$ reaches $v$,
        \item[(c)] if $x \in U$ and $y \in V(G) \setminus U$, then every $y$-$x$ path contains $u$,
        \item[(d)] if $x \in U$ and $y \in V(G) \setminus U$, then every $x$-$y$ path contains $v$,
        \item[(e)] if $(x, y)$ is an arc in $G$ with $x, y \in U$, then every $y$-$x$ path in $G$ contains both $v$ and $u$.
        \item[(f)] there is no node $v' \in U \setminus \{u, v\}$ such that $(u, v')$ satisfies properties (a)--(e) for some $U' \subset G$.
    \end{enumerate}
\end{definition}

Before we present our simplified definition of weak superbubbles, we observe that restricting $U = U_{uv} = U_{vu}^+$ introduces redundancy with conditions (a) through (d).
Hence we remove the extra restriction, and modify (a) and (b) to keep the definitions equivalent.

\begin{definition}[Weak superbubble and weak superbubble component (less redundant)]
    \label{def:superbubble-complex}
    Let $G$ be a directed graph and $U \subseteq V(G)$.
    Let $(u, v)$ be an ordered pair of distinct nodes of $U$.
    Then $(u, v)$ is a \emph{weak superbubble} with \emph{weak superbubble component} $U$ if:
    \begin{enumerate}[nosep]
        \item[(a)] every $x \in U$ is reachable from $u$ without passing through $v$,
        \item[(b)] every $x \in U$ reaches $v$ without passing through $u$,
        \item[(c)] if $x \in U$ and $y \in V(G) \setminus U$, then every $y$-$x$ path contains $u$,
        \item[(d)] if $x \in U$ and $y \in V(G) \setminus U$, then every $x$-$y$ path contains $v$,
        \item[(e)] if $(x, y)$ is an arc in $G$ with $x, y \in U$, then every $y$-$x$ path in $G$ contains both $v$ and $u$.
        \item[(f)] there is no node $v' \in U \setminus \{u, v\}$ such that $(u, v')$ satisfies properties (a)--(e) for some $U' \subset G$.
    \end{enumerate}
\end{definition}

See \Cref{apx:proofs} for a formal proof of the equivalence.

\begin{restatable}[Equivalence of less redundant definition]{lemma}{lemequivalencelessredundant}
    Let $G$ be a directed graph.
    Let $(u, v)$ be an ordered pair of nodes.
    Then $(u, v)$ is a weak superbubble by \Cref{def:superbubble-original} if and only if it is a weak superbubble by \Cref{def:superbubble-complex}.
    If $(u, v)$ is a weak superbubble, $U$ the weak superbubble component by \Cref{def:superbubble-original} and $U'$ the weak superbubble component by \Cref{def:superbubble-complex}, then it holds that $U = U'$.
\end{restatable}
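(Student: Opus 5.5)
The plan is to show that, for an arbitrary ordered pair $(u, v)$ of distinct nodes and an arbitrary set $U \subseteq V(G)$ containing them, conditions (a)--(e) of \Cref{def:superbubble-original} together with the constraint $U = U_{uv} = U_{vu}^+$ hold if and only if conditions (a)--(e) of \Cref{def:superbubble-complex} hold. I read $U_{uv}$, following Gärtner~et~al., as the set of nodes reachable from $u$ without passing through $v$ (with $v$ itself included). Likewise, $U_{vu}^+$ is the set of nodes that reach $v$ without passing through $u$, i.e.\ reachability in $\overleftarrow{G}$. Once this claim holds for \emph{every} pair and set, condition (f) also transfers. Indeed, (f) only asks whether some $(u, v')$ with some $U'$ satisfies (a)--(e), and the claim applied to $(u, v')$ and $U'$ shows this happens in one definition exactly when it happens in the other. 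Conditions (c), (d) and (e) are literally identical, so only (a), (b) and the set constraint need work. Equality of components is then immediate, because both definitions attach the same set $U$ to the pair.

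For the direction from \Cref{def:superbubble-original} to \Cref{def:superbubble-complex}, the set constraint does the work. From $U = U_{uv}$, every $x \in U$ is reachable from $u$ without passing through $v$, which is (a). From $U = U_{vu}^+$, every $x \in U$ reaches $v$ without passing through $u$, which is (b).

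For the converse, assume (a)--(e) of \Cref{def:superbubble-complex}. Condition (a) gives $U \subseteq U_{uv}$ and condition (b) gives $U \subseteq U_{vu}^+$, so only the reverse inclusions remain.
\begin{itemize}
\item For $U_{uv} \subseteq U$: suppose $y \in U_{uv} \setminus U$. Then there is a $u$-$y$ path avoiding $v$. Since $u \in U$ and $y \notin U$, condition (d) (with $x = u$) says this path must contain $v$, a contradiction. The path can be taken to be a path rather than a walk by shortcutting.
\item For $U_{vu}^+ \subseteq U$: symmetrically, suppose $y \in U_{vu}^+ \setminus U$. Then a $y$-$v$ path avoiding $u$ contradicts condition (c) (with $x = v$).
\end{itemize}
Hence $U = U_{uv} = U_{vu}^+$. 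Note that (a) and (b) of \Cref{def:superbubble-original} are weaker than the new (a) and (b), so they follow as well.

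The main obstacle I expect is purely bookkeeping about endpoint conventions. One issue is whether ``without passing through $v$'' excludes $v$ only as an interior node, so that $v \in U_{uv}$ and $u \in U_{vu}^+$ hold and the degenerate cases $x = u$ or $x = v$ behave. Another is whether a path with $y$ as endpoint ``contains'' $v$ when $y = v$. I would fix the convention that passing through means appearing as an internal node, and check those boundary cases explicitly. I would also state the core claim as a standalone auxiliary statement quantified over all pairs and sets, so that invoking it inside (f) is clean and not circular.
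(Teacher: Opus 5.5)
Your proposal is correct and follows the paper's proof essentially step for step: the set constraint $U = U_{uv} = U_{vu}^+$ gives the strengthened (a) and (b), and conversely (a) and (b) give $U \subseteq U_{uv}$ and $U \subseteq U_{vu}^+$, while (d) with $x = u$ and (c) with $x = v$ rule out strict inclusion. Your explicit transfer of (f), via a claim quantified over all pairs and sets, is more careful than the paper's remark that ``the remainder of the definition is unmodified''; it relies on reading the constraint $U' = U_{uv'} = U_{v'u}^+$ as part of what (f) checks in the original definition, which is the natural reading.
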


Now we simplify the less redundant definition of weak superbubbles further into fewer and simpler conditions that are analogue to the conditions for ultrabubbles.
See \Cref{fig:components}~(b) for an example of a weak superbubble component.

\begin{definition}[Weak superbubble and weak superbubble component (simplified)]
    \label{def:superbubble}
    Let $G$ be a directed graph. Let $(u, v)$ be an ordered pair of distinct nodes of $G$.
    The \emph{weak superbubble component} $B$ of $(u, v)$ is the graph consisting of $u$, $v$ and all nodes and arcs on $u$-$v$-paths.
    Then $(u, v)$ is a \emph{weak superbubble} if:
    \begin{enumerate}[nosep]
        \item[(a)] \emph{separable:} arcs outside of $B$ are incident only to nodes outside of $B$, except that they may be incoming to $u$ or outgoing from $v$.

        \item[(b)] \emph{acyclic:} $B$ contains no cycle.
        
        \item[(c)] \emph{minimal:} no node $w \in V(B) \setminus \{u, v\}$ is such that $(u, w)$ and $(w, v)$ satisfy above properties (a) and (b).
    \end{enumerate}
\end{definition}

In order to show the equivalence of the two weak superbubble definitions, we first re-derive the \emph{acyclicity} property from Gärtner~et~al.'s definition.
It was part of the original superbubble definition by Onodera~et~al.~\cite{onodera2013detecting}, but is not directly stated in Gärtner~et~al.'s definition of weak superbubbles.
See \Cref{apx:proofs} for a formal proof.

\begin{restatable}[Acyclicity of weak superbubbles]{lemma}{lemacyclicity}
    \label{lem:acyclicity}
    Let $G$ be a directed graph and let $(u, v)$ be a weak superbubble by \Cref{def:superbubble-complex} in $G$ with weak superbubble component $U$.
    Then the graph $G[U] \setminus (v, u)$ contains no cycle.
\end{restatable}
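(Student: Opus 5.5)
We argue by contradiction, taking a shortest proper cycle $C$ in $G[U] \setminus (v, u)$. Since $C$ uses at least one arc, it contains some arc $(x, y)$ with $x, y \in U$ and $(x,y) \neq (v,u)$. The remainder of $C$, from $y$ back to $x$, is a $y$-$x$ walk inside $U$. Shortening it to a path, property (e) of \Cref{def:superbubble-complex} says this path must contain both $u$ and $v$. Consequently the cycle $C$ itself passes through both $u$ and $v$; the degenerate cases where $y = x$ or the path is trivial are handled the same way. We may therefore rotate $C$ and split it into a $u$-$v$ subpath $P_1$ and a $v$-$u$ subpath $P_2$, both lying entirely in $U$ and neither using the arc $(v,u)$ as a single step.

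The key step is ruling out the $v$-$u$ part $P_2$. Because the arc $(v,u)$ is excluded, $P_2$ has at least one internal node, or at least its first arc $(v, z)$ has $z \neq u$ with $z \in U$. Apply (e) to this first arc $(v, z)$, with $v, z \in U$: every $z$-$v$ path must contain $u$. Now use (b): $z$ reaches $v$ without passing through $u$. This gives a $z$-$v$ path avoiding $u$, which contradicts (e).

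The edge case is $z = v$, a self-loop at $v$. Then (e) applied to $(v,v)$ requires the trivial $v$-$v$ path to contain $u$, which is impossible since $u \neq v$. So self-loops inside $U$ are excluded directly, and the same argument excludes self-loops at every node $x \in U$.

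In fact this argument is more general than the splitting into $P_1$ and $P_2$ suggests. For any arc $(x,y)$ in $G[U]$ other than $(v,u)$, properties (a) and (b) yield walks $y \to v$ avoiding $u$ and $u \to x$ avoiding $v$. I expect the only subtle point to be a single clean statement: every arc $(x, y)$ in $G[U]$ with $x \neq v$ or $y \neq u$ contradicts (e) whenever a cycle exists. The argument in the previous paragraph already gives it: take any arc $(v,z)$ on the cycle with $z\neq u$. Such an arc must exist, because the cycle passes through $v$ and its outgoing arc from $v$ is not $(v,u)$. Then (b) supplies a $z$-$v$ path avoiding $u$, contradicting (e). This completes the proof.
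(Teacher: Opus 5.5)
Your proof is correct and essentially the paper's. Your key step is exactly the paper's ``$C$ contains $v$'' case: take the arc $(v,z)$ leaving $v$ on the cycle, and (b) supplies a $z$-$v$ path avoiding $u$, which contradicts (e). Your preliminary use of (e) to force the cycle through $v$ plays the role of the paper's ``neither $u$ nor $v$'' case, and it makes the paper's separate $u$-case (via (a)) unnecessary. The aside in your last paragraph about walks $y \to v$ and $u \to x$ is never used and can be dropped.
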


Note that the weak superbubble component as in \Cref{def:superbubble-complex} is a set of nodes, but in our simplified \Cref{def:superbubble}, it is a graph.
This is to resemble the structure of the simplified ultrabubble definition, but also has the advantage that we do not need to worry about the arc $(v, u)$ for a weak superbubble $(u, v)$ as we do in \Cref{lem:acyclicity}.

To simplify our proof of equivalence, we first prove another implication.

\begin{lemma}
    \label{lem:superbubble-component}
    Let $G$ be a directed graph and let $(u, v)$ be a weak superbubble by \Cref{def:superbubble-complex}.
    Let $U$ be its weak superbubble component by \Cref{def:superbubble-complex} and let $B$ be its weak superbubble component by \Cref{def:superbubble}.
    Then $G[U] \setminus (v, u) = B$.
\end{lemma}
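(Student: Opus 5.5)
We prove the two inclusions $B \subseteq G[U] \setminus (v, u)$ and $G[U] \setminus (v, u) \subseteq B$ separately, both for nodes and for arcs, using \Cref{lem:acyclicity} as the main tool for the harder direction.

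\textbf{$B \subseteq G[U] \setminus (v,u)$.}
First, $u, v \in U$ by definition.
Let $P$ be a $u$-$v$-path, and suppose some node $y$ of $P$ lies outside $U$.
Then the prefix of $P$ from $u$ to $y$ is a $u$-$y$ path with $u \in U$ and $y \notin U$.
By condition (d) of \Cref{def:superbubble-complex}, this prefix must contain $v$.
Since $P$ is a path ending in $v$ and $y \neq v$, $v$ cannot occur before $y$, a contradiction.
Hence all nodes of $P$ are in $U$, so all arcs of $P$ are arcs of $G[U]$.
It remains to exclude the arc $(v,u)$.
A $u$-$v$ path starts at $u$ and ends at $v$ without repetition, so it can neither leave $v$ nor enter $u$.
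Therefore $(v,u)$ is not on it.

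\textbf{$G[U] \setminus (v,u) \subseteq B$, nodes.}
Let $x \in U \setminus \{u,v\}$.
By condition (a) there is a $u$-$x$ walk avoiding $v$, and by (b) an $x$-$v$ walk avoiding $u$; shorten both to paths $P_1, P_2$.
If the concatenation $P_1P_2$ repeats a node $z$, we get a proper cycle through $z$.
This cycle lies inside $U$, since, as above, (d) keeps every node of $P_1$ and $P_2$ in $U$.
It also avoids the arc $(v,u)$, because $P_1$ does not contain $v$ and $P_2$ does not contain $u$, while that arc would need $v$ on $P_1$ or $u$ on $P_2$.
Such a cycle contradicts \Cref{lem:acyclicity}.
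Hence $P_1P_2$ is a $u$-$v$ path through $x$, and $x \in V(B)$.

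\textbf{$G[U] \setminus (v,u) \subseteq B$, arcs.}
Let $(x,y)$ be an arc with $x,y \in U$ and $(x,y) \neq (v,u)$.
Take a $u$-$x$ path $P_1$ avoiding $v$, and a $y$-$v$ path $P_2$ avoiding $u$; these exist by (a) and (b) when $x \neq v$ and $y \neq u$.
If $P_1 (x,y) P_2$ repeats a node, we again obtain a proper cycle in $G[U] \setminus (v,u)$, contradicting \Cref{lem:acyclicity}.
Thus the arc lies on a $u$-$v$ path.

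\textbf{Degenerate cases and the main obstacle.}
The main obstacle is the arcs touching $u$ or $v$ in the wrong direction, i.e.\ $x = v$ or $y = u$ (for example a self-loop at $u$, or an arc $(v,w)$ with $w \in U$).
Such an arc can never lie on a $u$-$v$ path, so we must show it cannot exist.
If $y = u$ and $x \neq v$, then $P_1$ followed by $(x,u)$ is a proper cycle in $G[U] \setminus (v,u)$, contradicting \Cref{lem:acyclicity}.
If $x = v$ and $y \neq u$, then $(v,y)$ followed by $P_2$ is such a cycle.
Self-loops at $u$ or $v$ are themselves cycles and are excluded in the same way.
Care is needed throughout to verify that every constructed cycle genuinely avoids the arc $(v,u)$, since that arc is the only one in $G[U]$ that is allowed to close a cycle.
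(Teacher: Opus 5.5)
Your proof is correct and follows the paper's argument. The inclusion $B \subseteq G[U]\setminus(v,u)$ comes from condition (d) of \Cref{def:superbubble-complex} (the paper uses (c) and (d)). The reverse inclusion comes from concatenating the paths given by (a) and (b), through $(x,y)$ in the arc case, and invoking \Cref{lem:acyclicity} to rule out repeated nodes.

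Your explicit handling of arcs with $x = v$ or $y = u$ is more careful than the paper. The paper's claim that $W$ can be chosen without repeating $u$ or $v$ fails exactly in those cases.
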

\begin{proof}
    $(\Rightarrow)$
    Let $x \in U$.
    Then by \Cref{def:superbubble-complex}~(a), there is a $u$-$x$-path $P$ that does not contain $v$ and by~(b) there is an $x$-$v$ path $Q$ that does not contain $u$.
    Combined, $PQ$ is a $u$-$v$ walk containing $x$ that contains $u$ and $v$ only at its endpoints.
    Hence, by \Cref{def:superbubble-complex}~(c) and~(d), it holds that $PQ$ is in $U$.
    Further, if $PQ$ would repeat a node, then it would contain a cycle $C$ in $U$ that does not contain $u$ or $v$.
    But then $C$ would be in $G[U] \setminus (v, u)$ and hence by \Cref{lem:acyclicity} would contradict $(u, v)$ being a weak superbubble with weak superbubble component $U$.

    Let $e := (x, y) \in E(G[U] \setminus (v, u))$.
    Assume for a contradiction that $e \notin B$.
    Then every $u$-$v$ walk $W$ that contains $e$ would repeat a node $z$ and hence contain a cycle $C$.
    We can choose $W$ such that it does not repeat $u$ or $v$, and hence by \Cref{def:superbubble-complex}~(c) and~(d) we can choose $W$ and hence $C$ to be in $U$.
    Further, by choosing $W$ without repeating $u$ or $v$, also $C$ does not contain $u$ or $v$.
    But then $C$ would be in $G[U] \setminus (v, u)$ and hence by \Cref{lem:acyclicity} would contradict $(u, v)$ being a weak superbubble with weak superbubble component $U$.

    $(\Leftarrow)$
    Let $x \in B$ be a node or arc.
    Then there is a $u$-$v$ path $P$ that contains $x$.
    By \Cref{def:superbubble-complex}~(c) and~(d), because $u, v \in U$, it inductively holds that all nodes in $P$ are in $U$, including $x$ if it is a node.
    Because $P$ is a $u$-$v$ path, it does not contain the arc $(v, u)$, and hence $x$ is in $E(G[U] \setminus (v, u))$ if it is an arc.
\end{proof}

\begin{theorem}[Equivalence of weak superbubble definitions]
    \label{thm:superbubble-definition-equivalence}
    Let $G$ be a directed graph.
    Let $(u, v)$ be an ordered pair of nodes.
    Then $(u, v)$ is a weak superbubble by \Cref{def:superbubble-complex} if and only if it is a weak superbubble by \Cref{def:superbubble}.
    If $(u, v)$ is a weak superbubble, $U$ the weak superbubble component by \Cref{def:superbubble-complex} and $B$ the weak superbubble component by \Cref{def:superbubble}, then $G[U] \setminus (v, u) = B$.
\end{theorem}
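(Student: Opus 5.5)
The plan mirrors the proof of \Cref{thm:ultrabubble-definition-equivalence}: I prove both directions property by property, using \Cref{lem:superbubble-component} to identify the components. The key difficulty is the minimality condition, since each definition quantifies over its own notion of ``properties (a)--(e)'' or ``(a) and (b)''. I therefore first establish a stronger auxiliary statement, proved before treating minimality: a pair $(u,w)$ satisfies \Cref{def:superbubble-complex}~(a)--(e) for some $U'$ if and only if it satisfies \Cref{def:superbubble}~(a) and~(b), with $G[U']\setminus(w,u)$ equal to the simplified component. Once this holds, the minimality conditions (f) and (c) say the same thing and transfer directly in both directions.

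$(\Rightarrow)$ Let $(u,v)$ satisfy (a)--(e) of \Cref{def:superbubble-complex} with component $U$. The proof of \Cref{lem:superbubble-component} (and of \Cref{lem:acyclicity}) only uses (a)--(e), so $B = G[U]\setminus(v,u)$. Acyclicity of $B$ then follows from \Cref{lem:acyclicity}. For separability, take an arc $(x,y)\notin B$ with an endpoint in $B$. If both endpoints lie in $U$, then $(x,y)$ is either $(v,u)$, which is allowed, or an arc of $G[U]\setminus(v,u)=B$, a contradiction. If $x\in U$ and $y\notin U$, then (d) forces $x=v$, since the one-arc path $x,y$ must contain $v$; this is allowed. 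If $y\in U$ and $x\notin U$, then (c) forces $y=u$.

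$(\Leftarrow)$ Let $(u,v)$ satisfy \Cref{def:superbubble}~(a),(b) and set $U:=V(B)$. Properties (a) and (b) hold because every $x\in U$ lies on a $u$-$v$ path, and the prefix and suffix of that path avoid $v$ and $u$ respectively. For (c), take a $y$-$x$ path with $y\notin U$ and $x\in U$. It must contain a first arc entering $U$, and separability says such an arc is incoming to $u$. Property (d) follows symmetrically. For (e), let $(x,y)$ have $x,y\in U$. By separability it lies in $B$ or is $(v,u)$; the latter case is trivial, since any $u$-$v$ path contains both $u$ and $v$. If $(x,y)\in B$ and some $y$-$x$ path $Q$ misses $u$ or $v$, then one of two things happens. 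If $Q$ stays inside $B$ (it cannot leave $U$ except through $v$ and re-enter except through $u$), closing it with $(x,y)$ gives a cycle in $B$, contradicting acyclicity. Otherwise $Q$ passes through both $v$ and $u$.

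The main obstacle is this last argument. The subtle points are that a $y$-$x$ path inside $G[U]$ might use the arc $(v,u)$, which is not in $B$, and that leaving $U$ through $v$ forces re-entry through $u$. I would handle this by showing that any such path avoiding $u$ (or $v$) uses only arcs of $B$, again via separability. With the auxiliary equivalence in place, the minimality clauses match and the component identity is exactly \Cref{lem:superbubble-component}.
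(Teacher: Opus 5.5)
Your handling of properties (a)--(e) follows the paper's proof closely. You use the same appeal to \Cref{lem:superbubble-component} and \Cref{lem:acyclicity} and the same case split for separability. Your explicit auxiliary equivalence is what the paper means by ``we have not used property (f)/(c) above''. At (e) you are more careful than the paper: you check that a $y$-$x$ path missing $u$ or $v$ stays in $U$ and cannot use $(v,u)$, so it lies in $B$.

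The gap is the claim that, once the auxiliary equivalence holds, (f) and (c) ``say the same thing''. They do not. Write $P(a,b)$ for ``$(a,b)$ satisfies \Cref{def:superbubble}~(a),(b)'', which by your auxiliary equivalence is the same as \Cref{def:superbubble-complex}~(a)--(e).
\begin{itemize}
\item Condition (f) forbids every $w\in U\setminus\{u,v\}$ with $P(u,w)$.
\item Condition (c) forbids only those $w$ with $P(u,w)$ \emph{and} $P(w,v)$.
\end{itemize}
So (f) $\Rightarrow$ (c) is immediate, and your $(\Rightarrow)$ direction is fine. In $(\Leftarrow)$, however, you must exclude a $w$ with $P(u,w)$ but not $P(w,v)$, and nothing in your plan does this.

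The missing ingredient is a lemma: if $P(u,v)$ and $P(u,w)$ with $w\in V(B)\setminus\{u,v\}$, then $P(w,v)$. It does hold, by the following argument.
\begin{itemize}
\item By separability of $(u,v)$, every $u$-$w$ path stays inside $B$ and avoids $v$, so $v\notin V(B_{uw})$.
\item By separability of $(u,w)$, every $u$-$v$ path therefore passes through $w$.
\item Every $w$-$v$ path avoids $u$ and meets $V(B_{uw})$ only in $w$. Re-entering $B_{uw}$ is only possible at $u$, after which the $u$-$v$ suffix would revisit $w$.
\item Hence $B$ is $B_{uw}$ and $B_{wv}$ glued at $w$. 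Acyclicity of $B_{wv}\subseteq B$ is inherited.
\item For separability of $(w,v)$, consider an arc outside $B_{wv}$ that leaves $V(B_{wv})\setminus\{v\}$ or enters $V(B_{wv})\setminus\{w\}$. It would have to be either an arc of $B_{uw}$, which is impossible because the two components share only $w$ and no arc of $B_{uw}$ leaves $w$. Or it would be an arc outside $B$, which is impossible by separability of $(u,v)$, since $u\notin V(B_{wv})$.
\end{itemize}

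The paper's own $(\Leftarrow)$(f) step makes the same silent strengthening: it reads (c) as forbidding every $w$ with $(u,w)$ separable and acyclic. So this lemma is needed there too.
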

\begin{proof}
    $(\Rightarrow)$
    Let $(u, v)$ be a weak superbubble by \Cref{def:superbubble-complex} and let $U$ be its weak superbubble component.
    Let $B$ be its weak superbubble component by \Cref{def:superbubble}.
    By \Cref{lem:superbubble-component}, it holds that $G[U] \setminus (v, u) = B$.
    
    \begin{enumerate}
        \item[(a)]
        Let $(x, y) \in E(G) \setminus E(B)$ be an arc such that at least one of $x$ and $y$ is in $V(B)$.
        If $(x, y) = (v, u)$, then $(x, y)$ is outgoing from $v$ and incoming to $u$, and hence fulfils \Cref{def:superbubble}~(a).
        Otherwise, if $x \in V(B)$ and $y \in V(B)$, then $(x, y) \in E(G[U] \setminus (v, u)) = E(B)$, a contradiction.
        Hence, exactly one of $x$ and $y$ are in $V(B)$.
        But then, either $x = v$ or $y = u$, since otherwise $(x, y)$ would contradict $(u, v)$ being a weak superbubble by \Cref{def:superbubble-complex}~(c) and~(d).
        Hence, $(u, v)$ is separable by \Cref{def:superbubble}.
        
        \item[(b)]
        By \Cref{lem:acyclicity}, $B$ is acyclic by \Cref{def:superbubble}.

        \item[(c)]
        By \Cref{def:superbubble-complex}~(f), there is no node $v' \in U \setminus \{u, v\}$ such that $(u, v')$ is a weak superbubble by \Cref{def:superbubble-complex} without requiring property~(f).
        Since we have not used property~(f) in our proof above, it follows that there is no $v' \in B$ such that $(u, v')$ would be a weak superbubble by \Cref{def:superbubble} without requiring minimality.

        Further, note that in the reverse graph $\overleftarrow{G}$ of $G$ it holds that $(v, u)$ is a weak superbubble by \Cref{def:superbubble-complex}.
        Hence, by our argument above, it holds that there is no $u' \in B$ such that in $\overleftarrow{G}$ the pair $(v, u')$ would be a weak superbubble by \Cref{def:superbubble} without requiring minimality.
        Therefore, in $G$, $(u', v)$ would not be a weak superbubble by \Cref{def:superbubble} without requiring minimality.
        
        Therefore, $(u, v)$ is minimal by \Cref{def:superbubble}.
    \end{enumerate}
    
    Therefore, $(u, v)$ is a weak superbubble by \Cref{def:superbubble} with $G[U] \setminus (v, u) = B$.
    
    $(\Leftarrow)$
    Let $(u, v)$ be a weak superbubble by \Cref{def:superbubble} and let $B$ be its weak superbubble component.
    We choose $U = V(B)$ and show that with this choice of $U$ it holds that $(u, v)$ is a weak superbubble by \Cref{def:superbubble-complex} as well.

    \begin{enumerate}
        \item[(a)]
        By definition of $B$, every $x \in V(B)$ is on a $u$-$v$-path, and hence every $x \in U$ is reachable from $u$ without passing through $v$.
        
        \item[(b)]
        By definition of $B$, every $x \in V(B)$ is on a $u$-$v$-path, and hence every $x \in U$ reaches $v$ without passing through $u$.

        \item[(c)]
        Because $(u, v)$ is separable, the only way to enter $U$ is via $u$.
        Hence, for every $x \in U$ and $y \in V(G) \setminus U$ it holds that every $y$-$x$ path contains $u$.

        \item[(d)]
        Because $(u, v)$ is separable, the only way to leave $U$ is via $v$.
        Hence, for every $x \in U$ and $y \in V(G) \setminus U$ it holds that every $x$-$y$ path contains $v$.

        \item[(e)]
        Let $(x, y)$ be an arc in $G$ with $x, y \in U$.
        Let $P$ be a $y$-$x$ path.
        If $(x, y) = (v, u)$, then $P$ contains both $u$ and $v$.
        Otherwise, $(x, y)$ is in $B$ because $(u, v)$ is separable.
        Then, since $(u, v)$ is acyclic, it holds that $P$ contains both $u$ and $v$.

        \item[(f)]
        Because $(u, v)$ is minimal, no node in $B$ except for $v$ forms a pair with $u$ that forms a weak superbubble by \Cref{def:superbubble} without requiring property~(c).
        Since we have not used property~(c) in our proof above and $U = V(B)$, it follows that there is no node $v' \in U \setminus \{u, v\}$ such that $(u, v')$ is a weak superbubble by \Cref{def:superbubble-complex} without requiring property~(f).
    \end{enumerate}
    
    Therefore, $(u, v)$ is a weak superbubble by \Cref{def:superbubble-complex} with $G[U] \setminus (v, u) = B$.
\end{proof}

\section{Reduction}
\label{s:reduction}

Having harmonised the definitions of weak superbubbles and ultrabubbles, our reduction now becomes very simple due to the simplicity of the doubling operation.
See \Cref{fig:doubling}~(b) and~(c) for an example of the doubling operation.
The only challenge in the reduction is that ultrabubbles forbid cycloids, which have no immediate analogue in directed graphs, namely when the cycloid is not a cycle.
Hence, a cycloid $C$ that is not a cycle must prevent a pair of nodes $(u\alpha, v\beta)$ from being a weak superbubble, if it exists inside its weak superbubble component $B$.
The idea for proving this is as follows:
The presence of $C$ in a bidirected graph indicates that in the doubled graph, there is a walk from some node $a\alpha$ to its opposite $a\hat{\alpha}$.
Due to the symmetry of the doubled graph and the separability property, this cascades into further nodes that are opposites of nodes in $B$ becoming part of $B$, until either a node is reached that cannot be part of $B$, or a cycle is formed inside $B$.
See \Cref{fig:bubbles}~(c) and~(d) for an example of the former case and see \Cref{apx:proofs} for a complete proof of the following theorem.

\begin{restatable}[Equivalence between ultrabubbles and weak superbubbles]{theorem}{thmequivalence}
    \label{thm:equivalence}
    Let $G$ be a bidirected graph and $G'$ the corresponding doubled graph.
    Then $\{u\alpha, v\beta\}$ is an ultrabubble in $G$ if and only if $u \neq v$ and both $(u\alpha, v\hat{\beta})$ and $(v\beta, u\hat{\alpha})$ are weak superbubbles in $G'$.
\end{restatable}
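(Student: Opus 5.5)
We work throughout with the simplified definitions, \Cref{def:ultrabubble} and \Cref{def:superbubble}; \Cref{thm:ultrabubble-definition-equivalence} and \Cref{thm:superbubble-definition-equivalence} transfer the result to the original ones. The basic tool is \Cref{lem:connectivity}, which says bidirected walks in $G$ and directed walks in $G'$ correspond. We also use the \emph{symmetry} of $G'$: $(x\gamma, y\delta)\in E'$ if and only if $(y\hat\delta, x\hat\gamma)\in E'$. Consequently, reversing a directed walk and flipping every sign gives another directed walk.

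\textbf{Matching the components.} Let $B$ be the ultrabubble component of $\{u\alpha, v\beta\}$ and $B_1$ the weak superbubble component of $(u\alpha, v\hat\beta)$ in $G'$. A bidirected $u\alpha$-$v\hat\beta$-path never uses both signs of a node. By \Cref{lem:connectivity}, it is therefore exactly a directed $u\alpha$-$v\hat\beta$-path in $G'$ that visits at most one of $w+, w-$ for each $w$. So $B$ is the image of the ``sign-consistent'' part of $B_1$. By symmetry, the component $B_2$ of $(v\beta, u\hat\alpha)$ is the sign-flipped reverse of $B_1$. The key claim is that, under acyclicity, every $u\alpha$-$v\hat\beta$-path in $G'$ is sign-consistent.

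\textbf{($\Leftarrow$).} Assume both pairs are weak superbubbles. Suppose a path in $B_1$ contains both $w\gamma$ and $w\hat\gamma$, or more generally that $B$ contains a cycloid. Then $G'$ contains a walk from some $a\gamma$ to $a\hat\gamma$ with $a\gamma \in B_1$. I will follow the cascade sketched in the paper. By symmetry, every node of $B_1$ has its flipped copy in $B_2$. Separability of $B_1$ then forces the walk either to leave $B_1$ through a node other than $v\hat\beta$, which is a contradiction, or to stay inside $B_1$. In the latter case I flip the walk by symmetry and concatenate it with the original, which yields a closed walk inside $B_1$ and contradicts acyclicity.

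This cascade is the main obstacle: nodes reached by the flipped walk must be shown to lie in $B_1$, and the endpoint $u\hat\alpha$ or $v\beta$ cases need care. I would first try induction along the walk, using separability of both $B_1$ and $B_2$, to show that each visited signed node stays in $B_1 \cup B_2$. I would also show that $B_1$ and $B_2$ overlap only if $u\hat\alpha$ is reached from $u\alpha$, which produces a cycle.

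Once sign-consistency holds, $B$ and $B_1$ correspond exactly, and the remaining conditions transfer. Separability of $B_1$ and $B_2$ together gives \Cref{def:ultrabubble}(a): every edge outside $B$ yields an arc outside $B_1$. Acyclicity follows from the cycloid argument above. For minimality, a violating $w\gamma$ would give violating splits $(u\alpha, w\hat\gamma)$ and $(w\hat\gamma, v\hat\beta)$ in $G'$, which contradicts the minimality of $B_1$.

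\textbf{($\Rightarrow$).} Let $\{u\alpha, v\beta\}$ be an ultrabubble; then $u \neq v$ by definition. Because $B$ has no cycloid, no directed path in $G'$ from $u\alpha$ to $v\hat\beta$ can use both signs of a node, since such a path would map to a cycloid. Hence $B_1$ is exactly the doubled image of $B$ with the signs used on the paths, and $B_2$ is its flip.

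The three conditions then follow. Separability of $B_1$ follows from (a) of the ultrabubble: an arc leaving $B_1$ from a node other than $v\hat\beta$, or entering a node other than $u\alpha$, comes from an edge touching a signed node of $B$ that is not $u\hat\alpha$ or $v\hat\beta$. Such an edge lies in $B$, and then its arc lies on paths of $B_1$. Acyclicity follows because a directed cycle in $B_1$ maps to a cycloid in $B$. Minimality transfers through the same correspondence applied to the sub-pairs. The case $(v\beta, u\hat\alpha)$ follows by symmetry.
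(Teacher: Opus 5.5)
Your plan follows the paper's route: transfer each condition through \Cref{lem:connectivity} once $B_1$ is known to be sign-consistent. However, the step meant to give sign-consistency in the ($\Leftarrow$) direction fails. The symmetry you state maps a walk from $a\gamma$ to $a\hat\gamma$ to another walk from $a\gamma$ to $a\hat\gamma$: reversal swaps the endpoints and flipping swaps them back. So the two walks cannot be concatenated into a closed walk. Likewise, $u\hat\alpha$ being reachable from $u\alpha$ inside $B_1$ does not by itself produce a cycle.

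In fact, separability and acyclicity of $B_1$ and $B_2$ cannot suffice. Let $G$ have edges $\{u+,a-\}$, $\{a+,u+\}$, $\{u-,v+\}$. In $G'$ the component of $(u+,v-)$ has arcs $(u+,a+)$, $(u+,a-)$, $(a+,u-)$, $(a-,u-)$, $(u-,v-)$. It is acyclic, and the only other arc, $(v+,u+)$, enters the source, so it is separable; by symmetry the same holds for $(v+,u-)$. Yet $B_1$ contains both $u+$ and $u-$, there is no bidirected $u+$-$v-$ path at all, and the edge $\{u+,a-\}$ violates \Cref{def:ultrabubble}(a) for $\{u+,v+\}$. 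What rules this out is minimality: $u-$ splits $(u+,v-)$ into the separable acyclic pairs $(u+,u-)$ and $(u-,v-)$. Your proposal uses minimality of $B_1$ only when transferring minimality, so this ingredient is missing. The paper's appendix argument has the same weakness: it infers $v\beta\in B'$ from $u\hat\alpha\in B'$ by separability alone, which this example refutes. Following its sketch will therefore not close the gap.

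The missing lemma is this: if $u\neq v$ and $(s,t)=(u\alpha,v\hat\beta)$ is a weak superbubble, then $V(B_1)$ is disjoint from its mirror $V(B_2)$. A proof sketch:
\begin{itemize}
\item Suppose $x,\hat x\in V(B_1)$. Mirror an $s$-$\hat x$ path of $B_1$ to get an $x$-$\hat s$ walk. Separability keeps this walk in $V(B_1)$ unless it passes $t$, in which case the original path passes $\hat t$. So $\hat s$ or $\hat t$ is an inner node of $B_1$ (inner because $u\neq v$).
\item If it is $\hat s$, the set of nodes on $s$-$\hat s$ paths is closed under mirroring. Combined with separability of $(s,t)$, this shows that $(s,\hat s)$ and $(\hat s,t)$ are both separable and acyclic, contradicting minimality.
\item The case $\hat t$ is the mirror image, using $\hat t$-$t$ paths.
\end{itemize}
Given this disjointness, every edge of $B$ is traversed in only one direction by $u\alpha$-$v\hat\beta$ paths. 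A cycloid in $B$ then lifts to a walk that stays entirely in $B_1$ or entirely in $B_2$, and is therefore a cycle. With that, your transfers of (a)--(c) go through.

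Your ($\Rightarrow$) separability step (``its arc lies on paths of $B_1$'') also silently needs this cross-path disjointness, not just per-path sign-consistency. There it does follow from the ultrabubble's acyclicity and separability. Two $u\alpha$-$v\hat\beta$ paths through opposite signs of a node splice, via the mirror, into a $u\alpha$-$u\hat\alpha$ cycloid inside $B$. Separability of $B$ is also what keeps such a walk inside $B$ in the first place.
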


Now we can construct a linear-time reduction, and to get a linear-time algorithm for enumerating ultrabubbles, we need a linear-time algorithm for enumerating weak superbubbles.
We use the algorithm of Gärtner~et~al.~\cite{gärtner-revisited}, specifically their Corollary~7 which can be restated in our terms as follows.

\begin{lemma}[Linear time identification \cite{gärtner-revisited}]
    \label{lem:linear-superbubbles}
    The weak superbubbles in a directed graph $G = (V, E)$ can be identified in $O(|V| + |E|)$ time.
\end{lemma}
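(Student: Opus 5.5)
This lemma is imported rather than proved from scratch, so the plan is to reduce it to Corollary~7 of Gärtner~et~al.~\cite{gärtner-revisited} and then check that the objects identified there coincide with the weak superbubbles in our sense.

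\textbf{Step 1: match the definitions.} Gärtner~et~al.\ state their result for the weak superbubbles of their Definition~2, which we restated as \Cref{def:superbubble-original}. I would go through that restatement condition by condition against their original wording, using the same ordered pair $(u,v)$ and the same set $U=U_{uv}=U_{vu}^+$. The aim is to confirm that the set of pairs their algorithm reports is exactly the set of weak superbubbles of \Cref{def:superbubble-original}. Combining this with the equivalence lemma for \Cref{def:superbubble-complex} and with \Cref{thm:superbubble-definition-equivalence}, the reported pairs are also exactly the weak superbubbles of \Cref{def:superbubble}. These are the ones used in \Cref{thm:equivalence}, so no further translation is needed downstream.

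\textbf{Step 2: justify the running time.} Their corollary may assume a normalised input, for example a graph with a unique source and sink, or one restricted to strongly connected components plus artificial terminals. In that case I would spell out the wrapping: add a super-source and super-sink, or apply their algorithm per component as they prescribe. This takes $O(|V|+|E|)$ time, and the artificial nodes can be discarded because they never lie inside a weak superbubble component. I would also note that, by minimality~(c) in \Cref{def:superbubble}, each node is the entrance $u$ of at most one weak superbubble. Hence there are at most $|V|$ weak superbubbles, and outputting them, if required, stays within the linear bound.

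\textbf{Main obstacle.} I expect the delicate part to be Step~1. It means checking that Corollary~7 really identifies \emph{weak} superbubbles, with the arc $(v,u)$ allowed, rather than ordinary superbubbles or superbubbloids. It also means checking that any graph preprocessing in \cite{gärtner-revisited} neither creates nor destroys weak superbubbles of the original graph. If their corollary is phrased only for superbubbloids, I would first try to show that filtering the superbubbloids by minimality can be done in linear time. I would base that filtering on their nesting structure.
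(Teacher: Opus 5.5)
Your proposal matches the paper: the paper gives no proof of its own and simply invokes Corollary~7 of Gärtner~et~al.~\cite{gärtner-revisited}, restated in its own terms. Your definition-matching and input-normalisation checks are extra diligence the paper omits. Two small corrections: the paper takes \Cref{def:superbubble-original} from Definition~2 of \cite{gärtner2019direct}, not from \cite{gärtner-revisited}, so your Step~1 check is really between those two papers; and the output-size bound is unnecessary, since the cited linear running time already covers reporting the output.
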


\begin{theorem}
    In a bidirected graph $G$, all ultrabubbles can be enumerated in time $O(n + m)$.
\end{theorem}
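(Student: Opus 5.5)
The plan is to combine \Cref{obs:doubling}, \Cref{lem:linear-superbubbles} and \Cref{thm:equivalence} into a three-phase algorithm. First, construct the doubled graph $G'$ of $G$. By \Cref{obs:doubling} this takes $O(n+m)$ time, and $G'$ has $2n$ nodes and at most $2m$ arcs. Second, run the algorithm of Gärtner~et~al.\ from \Cref{lem:linear-superbubbles} on $G'$ to identify all weak superbubbles. This takes $O(|V(G')|+|E(G')|)=O(n+m)$ time. Third, filter the output: by \Cref{thm:equivalence}, $\{u\alpha, v\beta\}$ is an ultrabubble exactly when $u\neq v$ and both $(u\alpha, v\hat{\beta})$ and $(v\beta, u\hat{\alpha})$ are weak superbubbles of $G'$.

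For the filtering step, I would define the \emph{mirror} of an ordered pair $(x\gamma, y\delta)$ as $(y\hat{\delta}, x\hat{\gamma})$. Taking $x\gamma = u\alpha$ and $y\delta = v\hat{\beta}$, the mirror of $(u\alpha, v\hat{\beta})$ is $(v\beta, u\hat{\alpha})$. So an ultrabubble corresponds exactly to a weak superbubble, with distinct underlying nodes, whose mirror is also a weak superbubble. Mirroring is an involution, so each ultrabubble is found twice, once from each of its two weak superbubbles. To output it only once, I would keep, say, only the pair whose entrance signed node is lexicographically smaller.

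The mirror check must be done in constant time per weak superbubble. I would store the weak superbubbles in an array indexed by entrance signed node, holding the corresponding exit. This works because each node is the entrance of at most one weak superbubble: a second one would contradict minimality (\Cref{def:superbubble}~(c)). If I prefer not to rely on that, I would instead radix-sort the pairs, whose keys lie in $[0,2n)^2$, in $O(n+m)$ time, and match mirrors by a linear merge. Either way the filtering costs time linear in the number of weak superbubbles reported, which is itself $O(n+m)$ because the algorithm of \Cref{lem:linear-superbubbles} runs in that time. Checking $u\neq v$ is constant time per pair.

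The main obstacle is making sure the weak superbubbles can actually be listed explicitly within linear time, not merely ``identified''. I would rely on Gärtner~et~al.'s Corollary~7 producing the set of pairs, and on the bound above on their number, to justify this. All the correctness content is already contained in \Cref{thm:equivalence}.
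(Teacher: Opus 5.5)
Your proposal is correct and follows the paper's proof: double $G$, enumerate the weak superbubbles of $G'$ with the linear-time algorithm of \Cref{lem:linear-superbubbles}, discard pairs on a single underlying node, keep one representative per mirror pair, and translate via \Cref{thm:equivalence}. The only difference is that you explicitly look up the mirror $(v\beta, u\hat{\alpha})$, which the paper does not. That check is automatic, because $x\gamma \mapsto x\hat{\gamma}$ is an isomorphism from $G'$ onto its reverse, so $(u\alpha, v\hat{\beta})$ is a weak superbubble if and only if $(v\beta, u\hat{\alpha})$ is. Your check is therefore redundant but harmless, and it frees your argument from relying on this symmetry, which the paper uses without stating it.
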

\begin{proof}
    We double the graph in linear time by~\Cref{obs:doubling}.
    Then we enumerate all weak superbubbles in linear time by~\Cref{lem:linear-superbubbles}.
    We filter weak superbubbles of the form $(v\alpha, v\hat{\alpha})$ (which may occur e.g.~due to bidirected self loops).
    Then we filter out superbubbles $(u, v)$ where $u > v$.
    Resulting, we have one superbubble per ultrabubble, which we then translate according to \Cref{thm:equivalence}.
    This all works in linear time.
\end{proof}

\section{Conclusions}

We have shown that ultrabubbles can be enumerated in linear time via doubling the bidirected graph and then enumerating weak superbubbles on the resulting directed graph.
This motivates that doubling a bidirected graph is not just a simplification technique for simple cases like shortest paths, but also for less obvious application like the computation of ultrabubbles.
In the future, it would be interesting to investigate to what extent for example flow-like algorithms can be applied in bidirected graphs via graph doubling.

\bibliography{bibliography}

\appendix

\section{Omitted proofs}
\label{apx:proofs}

\lemconnectivity*
\begin{proof}
    $(\Rightarrow)$
    Let $v_1\alpha_1, \dots, v_\ell\alpha_\ell$ be a walk in $G$.
    Then by the definition of the doubling operation, $G'$ contains all nodes $v_1\alpha_1, \dots, v_\ell\alpha_\ell$.
    And since the edges $\{v_i\alpha_i, v_{i+1}\hat{\alpha}_{i+1}\}$ exist in $G$, the arcs $(v_i\alpha_i, v_{i+1}\alpha_{i+1})$ exist in $G'$.
    Hence, $v_1\alpha_1, \dots, v_\ell\alpha_\ell$ is a walk in $G'$.

    $(\Leftarrow)$
    Let $v_1\alpha_1, \dots, v_\ell\alpha_\ell$ be a walk in $G'$.
    Then by the definition of the doubling operation, $G$ contains all nodes $v_1, \dots, v_\ell$.
    And since the arcs $(v_i\alpha_i, v_{i+1}\alpha_{i+1})$ exist in $G'$, the edges $\{v_i\alpha_i, v_{i+1}\hat{\alpha}_{i+1}\}$ exist in $G$.
    Hence, $v_1\alpha_1, \dots, v_\ell\alpha_\ell$ is a walk in $G$.
\end{proof}

\lemequivalencelessredundant*
\begin{proof}
    $(\Rightarrow)$
    Let $(u, v)$ be a weak superbubble by \Cref{def:superbubble-original} with weak superbubble component $U$.
    In \Cref{def:superbubble-complex} we have removed the initial restrictions on $U$, and hence $U$ can be a weak superbubble component by \Cref{def:superbubble-complex} as well.
    Further, we modified conditions~(a) and~(b).
    Since by \Cref{def:superbubble-original}, $U$ is defined to contain only nodes in $U_{uv} \cap U_{vu}^+$, it holds that $(u, v)$ with $U$ fulfils these conditions in \Cref{def:superbubble-complex}.
    The remainder of the definition is unmodified.
    Hence, $(u, v)$ is a weak superbubble by \Cref{def:superbubble-complex} with weak superbubble component $U$.

    $(\Leftarrow)$
    Let $(u, v)$ be a weak superbubble by \Cref{def:superbubble-complex} with weak superbubble component $U$.
    Conditions~(a) and (b) in \Cref{def:superbubble-original} are weaker than those in \Cref{def:superbubble-complex}, and hence $(u, v)$ with $U$ fulfils these conditions of \Cref{def:superbubble-original}.
    \Cref{def:superbubble-original} restricts $U$ such that $U = U_{uv} = U_{vu}^+$.
    By \Cref{def:superbubble-complex}~(a) and~(b), it holds that $U \subseteq U_{uv}$ and $U \subseteq U_{vu}^+$.
    Further, if $U \neq U_{uv}$, then there would be a path out of $U$ that does not contain $v$, contradicting \Cref{def:superbubble-complex}~(d).
    Symmetrically, if $U \neq U_{vu}^+$, then there would be a path into $U$ that does not contain $u$, contradicting \Cref{def:superbubble-complex}~(c).
    Hence, it holds that $U = U_{uv} = U_{vu}^+$.
    The remainder of the definition is unmodified.
    Hence, $(u, v)$ is a weak superbubble by \Cref{def:superbubble-original} with weak superbubble component $U$.
\end{proof}

\lemacyclicity*
\begin{proof}
    Assume for a contradiction that there was a cycle $C$ in $G[U] \setminus (v, u)$.
    Without loss of generality, take $C$ as a closed path.
    We distinguish between $C$ containing $u$, $v$ or neither.

    \begin{itemize}
        \item
        If $C$ contains $u$, then let $(x, u)$ be the arc on $C$ that is incoming to $u$.
        Because of \Cref{def:superbubble-complex}~(a), there is a $u$-$x$ path $P$ that does not contain $v$.
        But $P$ contradicts \Cref{def:superbubble-complex}~(e).

        \item
        Symmetrically, if $C$ contains $v$, then let $(v, x)$ be the arc on $C$ that is outgoing from $v$.
        Because of \Cref{def:superbubble-complex}~(b), there is a $x$-$v$ path $P$ that does not contain $u$.
        But $P$ contradicts \Cref{def:superbubble-complex}~(e).

        \item
        If $C$ contains neither $u$ nor $v$, then it contradicts \Cref{def:superbubble-complex}~(e) for every arc in $C$.
    \end{itemize}
    
    Therefore, $C$ cannot exist.
\end{proof}

\thmequivalence*
\begin{proof}
    $(\Rightarrow)$
    Let $U := \{u\alpha, v\beta\}$ be an ultrabubble in $G$ and let $B$ be its ultrabubble component.
    Let $S := (u\alpha, v\hat{\beta})$ be a pair of nodes in $G'$ and let $B'$ be its weak superbubble component.
    Because $U$ is an ultrabubble, $u \neq v$.

    \begin{enumerate}
        \item[(a)]
        Let $e$ be an arc outside of $B'$ that is incident to a node in $B'$.
        Assume for a contradiction that $e$ was not incoming to $u\alpha$ and not outgoing from $v\hat{\beta}$.
        Then by \Cref{lem:connectivity}, there would be an edge $e$ outside of $B$ that would contain a signed node inside $B$ that is neither $u\hat{\alpha}$ nor $v\hat{\beta}$.
        This would contradict $U$ being separable.

        Hence, $S$ is separable.
        
        \item[(b)]
        Assume for a contradiction that $B'$ would contain a cycle.
        Then by \Cref{lem:connectivity}, there would be a cycle in $B$ and hence there would be a cycloid in $B$, contradicting $U$ being acyclic.
        Hence, $B'$ is acyclic.

        \item[(c)]
        Assume for a contradiction that there was a node $w\gamma \in V(B')$ such that $(u\alpha, w\gamma)$ and $(w\gamma, v\hat{\beta})$ would be weak superbubbles without requiring minimality.
        Then by \Cref{lem:connectivity}, the signed node $w\gamma$ in $G$ would contradict $U$ being minimal.
        Hence, $S$ is minimal.
    \end{enumerate}
    
    Hence, $S$ is a weak superbubble in $G'$.
    By symmetry, also $(v\beta, u\hat{\alpha})$ is a weak superbubble in $G'$.

    $(\Leftarrow)$
    Let $S := (u\alpha, v\hat{\beta})$ and $T := (v\beta, u\hat{\alpha})$ be weak superbubbles in $G'$ with weak superbubble components $B'$ and $C'$ and $u \neq v$.
    Let $U := \{u\alpha, v\beta\}$ be a set of signed nodes in $G$ and let $B$ be its ultrabubble component.
    
    \begin{enumerate}
        \item[(a)]
        Let $e$ be an edge outside of $B$ that is incident to a node $w\gamma$ in $B$.
        Assume for a contradiction that $w\gamma \neq u\alpha$ and $w\gamma \neq v\hat{\beta}$.
        Then by \Cref{lem:connectivity}, there would be an arc $e$ outside of $B'$ that would be incident to a node in $B'$ other than being incoming to $u\hat{\alpha}$ or being outgoing from $v\hat{\beta}$.
        This would contradict $S$ being separable.

        Hence, $U$ is separable.
        
        \item[(b)]
        Assume for a contradiction that $B$ would contain a cycloid $A$.
        If $A$ is a cycle, then by \Cref{lem:connectivity}, there would be a cycle in $B'$, contradicting $S$ being acyclic.
        
        If $A$ is not a cycle, then $B$ contains a bidirected walk from $u\alpha$ to $u\hat{\alpha}$ without passing through $v$ or from $v\beta$ to $v\hat{\beta}$ without passing through $u$.
        In the first case, it would hold that $u\hat{\alpha} \in B'$, and hence by \Cref{def:superbubble}~(a) that $v\beta \in B'$.
        But then, by definition of $B'$, there must be a walk in $B'$ from $v\beta$ to $u\alpha$, and a walk from $u\alpha$ to $v\beta$.
        Together, these would form a cycle.
        Symmetrically, if $B$ contains a bidirected walk from $v\beta$ to $v\hat{\beta}$ without passing through $u$, then $C'$ would contain a cycle.
        
        Hence, $B$ is acyclic.

        \item[(c)]
        Assume for a contradiction that there was a signed node $w\gamma$ with $w \in V(B)$ such that $\{u\alpha, w\hat{\gamma}\}$ and $\{w\gamma, v\beta\}$ would be ultrabubbles without requiring minimality.
        Then by \Cref{lem:connectivity}, the node $w\gamma$ in $G'$ would contradict $S$ being minimal.
        Hence, $U$ is minimal.
    \end{enumerate}
    
    Therefore, $U$ is an ultrabubble in $G$.
\end{proof}

\end{document}